\newlength{\dhatheight}
\newcommand{\real}{\mathbb{R}}
\newtheorem{theorem}{Theorem}
\newtheorem{problem}{Problem}
\newtheorem{definition}{Definition}
\newtheorem{remark}{Remark}
\newcommand{\until}[1]{\{1,\dots, #1\}}
\newcommand{\supscr}[2]{#1^{\textup{#2}}}
\newcommand{\map}[3]{#1: #2 \rightarrow #3}
\newcommand{\param}[1]{\texttt{#1}}
\title{UAV Surveillance Under Visibility and \\Dwell-Time Constraints:\\A Sampling-Based Approach
\thanks{
This work has been sponsored by
the U.S.\ Army Research Office and the Regents of the University of
California, through Contract Number W911NF-09-D-0001 for the Institute for
Collaborative Biotechnologies, and that the content of the information does
not necessarily reflect the position or the policy of the Government or the
Regents of the University of California, and no official endorsement should
be inferred.}} 
\author{Jeffrey R. Peters\thanks{Corresponding author}\affiliation{United Technologies Research Center\\East Hartford, CT, CA 06118}}
\author{Amit Surana\affiliation{United Technologies Research Center\\East Hartford, CT 06118}}
\author{Grant S. Taylor\affiliation{Aviation Development Directorate\\United States Army\\Moffett Field, CA, 94035-1000}}
\author{Terry S. Turpin\affiliation{Aviation Development Directorate \\United States Army (Contractor)\\Moffett Field, CA, 94035-1000}}
\author{Francesco Bullo\affiliation{Department of Mechanical Engineering\\University of California\\Santa Barbara, CA 93106}}
\begin{document}
\maketitle
\vspace{-1em}
\begin{abstract}
  A framework is introduced for planning unmanned aerial vehicle flight paths for visual surveillance of ground targets, each having particular viewing requirements. Specifically, each target is associated with a set of imaging parameters, including a desired (i) tilt angle, (ii) azimuth, with the option of a 360-degree view, and (iii) dwell-time. Tours are sought to image the targets, while minimizing both the total mission time and the time required to reach the initial target. An $\epsilon$-constraint scalarization is used to pose the multi-objective problem as a constrained optimization, which, through careful discretization, can be approximated as a discrete graph-search. It is shown that, in many cases, this approximation is equivalent to a generalized traveling salesperson problem. A heuristic procedure for solving the discrete approximation and recovering solutions to the full routing problem is presented, and is shown to have resolution completeness properties. Algorithms are illustrated through numerical studies.\end{abstract}

\section{Introduction}
\label{sec:introduction}
The use of autonomous mobile sensors is becoming increasingly
common in civilian and military applications, since they can
provide support in tasks that are too dangerous, too expensive, or 
too difficult for humans to perform unaided. 
Tasks that can benefit from autonomous sensors include search and rescue, forest fire or oil spill monitoring,
surveillance and reconnaissance, transportation and
logistics management, and hazardous waste cleanup~\cite{JR:06,KPV:08, usaf:10}. These applications require intelligent and practical strategies to govern autonomous behavior in the presence of numerous constraints that arise in realistic missions.

This article considers a particular mobile sensor scenario that is of interest, e.g., in military operations~\cite{osd:05}, where a single fixed-wing Unmanned Aerial Vehicle (UAV) collects visual sensor data within a large geographic environment. Here, the UAV is equipped with a gimbaled camera, and must provide surveillance imagery of multiple static ground targets, each having associated imaging constraints. These pre-specified  constraints include (i) a desired tilt angle with tolerances, (ii) a desired azimuth with tolerances, including the option of a 360-degree view, and (iii) the amount of time that the UAV should dwell before moving to the next target. The goal is to construct flight paths that are optimal in some sense, while simultaneously 
allowing each target to be imaged with the prescribed specifications.

Ideally, we seek flight paths that simultaneously minimize two metrics: (i) the time required for the UAV to image all targets and return to the initial target, and (ii) the delay between the mission onset and the time the UAV reaches its first target. The latter goal is important because sensory data often plays a crucial role in the dynamic development of mission strategies, i.e., mission critical planning often cannot progress until some sensory
information is provided. This is true, for example, in reconnaissance missions where sensory data supports the operations of a manned aircraft, e.g.~\cite{USA:07}. 
Thus, it is often important that initial imagery is provided quickly, even if at the
expense of total tour duration.

Since these two objectives are conflicting in general, we construct solutions by constraining the second performance metric and optimizing over the first.
This is a standard \emph{scalarization} approach for addressing multi-objective problems, since solutions of the constrained problem allow for the construction of Pareto-optimal fronts~\cite{KM:98}. 
We develop a discrete-approximation strategy for constructing high-quality solutions to the scalarized problem, naturally leading to a complete heuristic framework for the full, multi-objective mission. This approach produces reasonable flight paths within a modular framework, while also leveraging existing solution strategies to classical routing problems to ensure straightforward, effective, practical implementation. 

Specifically, the contributions of this article are as follows. First,
we show how the multi-objective routing problem with both visibility
and dwell-time constraints can be posed as a constrained
optimization problem through reasonable assumptions about UAV
trajectories and dwell-time maneuvers. Indeed, we define
\emph{visibility regions} at each target that reflect imaging
requirements, along with a set of feasible dwell-time maneuvers, which are performed within the regions, that both accommodate the dynamic constraints of the UAV and are intuitively straightforward. These constructions are then translated
into a constraint set and incorporated into a precise problem
statement, which represents an $\epsilon$-constraint scalarization of
the multi-objective problem.  

Then, we illustrate how the constrained
optimization problem, having an infinite solution space, is approximated by a discrete problem, having a finite solution
space. This discrete approximation implicitly considers both the time
required for appropriate dwell-time maneuvers and the visibility
constraints, as it is formulated through a selective sampling
procedure. Namely, the UAV configuration space is carefully
sampled to obtain a discrete set whose elements are each paired with a
feasible dwell-time trajectory. Elements of this set form the nodes of a graph
whose edge weights are defined via an augmented Dubins distance that incorporates required dwell-times. The discrete approximation is defined as an optimal path-finding problem over the resultant
graph. 

Next, we present a novel heuristic method for solving the
discrete approximation that leverages solutions to standard
\emph{Generalized Traveling Salesperson Problem (GTSP)} instances. We
show that our heuristic method produces feasible solutions and, in
many cases, maps optimal solutions of a GTSP directly to
optimal solutions of the discrete problem. 

Finally, we incorporate
these constructions into a complete heuristic framework to produce
high-quality solutions to the full, multi-objective routing
problem. We prove that the heuristic is \emph{resolution complete} in
a specific mathematical sense, and finish by illustrating our methods
numerically.

We note that our work is most closely related to that of~\cite{KJO-PO-SD:12}, which uses a similar sampling-based framework in order to address a \emph{Polygon-Visiting Dubins Traveling Salesperson Problem (PVDTSP)}. However, the framework in~\cite{KJO-PO-SD:12} considers a single performance metric (total tour time), and cannot incorporate non-trivial UAV dwell-time behaviors. As such, our work is, in some sense, an expansion of~\cite{KJO-PO-SD:12} to incorporate a more general set of geometric and temporal imaging behaviors, and accommodate an additional performance metric that is reflective of realistic mission scenarios. In fact, the more general solution framework herein reduces to that of~\cite{KJO-PO-SD:12} in particular cases (Remark~\ref{rem:existing}). 

\section{Related Literature}
\label{sec:literature}
A large amount of
recent research focuses on the development of coordination strategies for UAV applications (e.g.~\cite{SR-RS-SD:05, DBK-RWB-RSH:08, JLF-PK-AG:15,VS-TS:15}). 
However, UAV research is a
sub-class of a much larger body of literature that addresses algorithmic
design and high-level reasoning for general mobile sensor applications~\cite{CAR-CYS-AT:07, JKH-BB-JL-BML:11}.
These applications often necessitate
solutions to complex routing problems, which may involve logical, temporal, and spatial
constraints, as well as environmental uncertainty.
When construction of global optima is not feasible, heuristics are often used to permit the real-time construction of
solutions for use within practical systems.
For example, variations on the classic
Traveling Salesperson Problem (TSP)~\cite{GG-APP:07, KS-EF-FB:06h,
  JJE-KS-EF-FB:08k} arise frequently and, since the TSP is NP-hard, global optima usually cannot be consistently found in reasonable time. However, several new
insights have been developed over the last decade for the classical TSP, along with a number of variations~\cite{WM-SR-SD:07,
  MN-PTK-ARG:15, MA-JPH:08}, and sophisticated heuristic solvers, e.g.~\cite{KH:00} can quickly construct high-quality solutions for practical use.
 Other problems that are common in mobile sensor applications include the construction region selection policies~\cite{VS-FP-FB:11za, RP-PA-FB:14b}, the derivation of static and
dynamic coverage schemes~\cite{ym-ak:08, FB-EF-MP-KS-SLS:10k}, the development of persistent task execution frameworks~\cite{NM-SLS-SLW:15}, and the design of load balancing
strategies~\cite{FB-RC-PF:08u, JGC:12}.

The problem herein is loosely interpreted as a
generalization of both the \emph{Polygon-Visiting Dubins Traveling Salesperson
  Problem (PVDTSP)}~\cite{KJO-PO-SD:12, KJO:09} and the \emph{Dubins Traveling Salesperson Problem with Neighborhoods (DTSPN)}~\cite{JTI-JPH:13}. The PVDTSP and the DTSPN are variations on the classic \emph{Dubins TSP (DTSP)} that require vehicles to visit a series of geometric regions rather than discrete points.  
To incorporate explicit imaging constraints within a multi-objective framework, we adopt a strategy that
is, in some sense, an extension of~\cite{KJO-PO-SD:12}, where the
authors approximate solutions to a PVDTSP by discretizing
regions of interest and posing the resulting problem as a \emph{Generalized
  Traveling Salesperson Problem (GTSP)} (also called the \emph{Set TSP}, \emph{Group TSP}, 
\emph{(Finite) One-in-a-set TSP}, \emph{Multiple Choice TSP}, or \emph{Covering Salesperson Problem}). Sampling-based strategies that approximate a continuous motion planning problem with a discrete path-finding problem over a graph are typically called \emph{sampling-based roadmap
  methods}~\cite[Ch. 5]{SML:06}. Such methods are
traditionally used for point-to-point planning among obstacles; however, they have also been used for more general Dubins path planning, e.g.~\cite{PO-SR-SD:09b}.

Like the TSP, the GTSP is a combinatorial optimization; however, strategies exist for computing high-quality
solutions. The most popular solution approach converts the GTSP into an \emph{Asymmetric Traveling
Salesperson Problem} (ATSP) via a \emph{Noon-Bean transform}~\cite{CEN-JCB:91a}. The availability of efficient solvers for the ATSP, e.g. LKH~\cite{KH:00}, make such transformations a viable GTSP solution option in practice, though they do not produce global optima in general. 
Other common GTSP solution approaches make use of meta-heuristics, e.g.~\cite{LVS-MSD:00}.
%
%

Aside from the explicit visibility and dwell-time constraints, the problem herein differs from typical ``TSP-like'' problems due to the presence of multiple (potentially conflicting) performance objectives.
A vast amount of literature addresses multi-objective optimization problems, some of which focuses
on purely mathematical formulations while some targets solutions for
particular applications~\cite{RTM-AJS:04,
  PJF-RCP-RJL:05,DAG-DEJ:04}. Solution approaches typically
seek to satisfy a pre-determined notion of optimality, the
most common being \emph{Pareto optimality}. Pareto-optimal fronts are usually difficult to characterize directly, so they are typically constructed using  \emph{scalarization} methods, which reduce the multi-objective problem to a single-objective problem whose optimal solutions map to Pareto-optimal solutions with respect to the original problem. The most common scalarization techniques are \emph{linear scalarization}, where the cost is a linear combination of the objectives, and \emph{$\epsilon$-constraint method}, where the values of all but a single objective are explicitly treated as optimization constraints~\cite{KM:98}. 

Our approach to the present problem uses an $\epsilon$-constraint method that treats the initial UAV maneuver time as an explicit optimization constraint.  
In general, no single approach to multi-objective
optimization is always superior; the
appropriate method depends on the type of information available, the user's preferences, solution requirements, and the
availability of software~\cite{PJF-RCP-RJL:05}. We choose the $\epsilon$-constraint method since it (i) is a standard approach to multi-objective optimization, (ii) gives explicit flexibility in the degree to which the initial maneuver time is considered, allowing the approach to be easily tailored to specific mission scenarios, (iii) allows for a heuristic solution procedure that is naturally modular and can leverage solutions to standard routing problems, (iv) promotes straightforward approximation of Pareto-optimal fronts by varying scalarization parameters, and (v) is straightforward and intuitive, making it an attractive approach to both practitioners and theoreticians alike.

\section{Problem Formulation}
\label{sec:formulation}
\subsection{UAV Specifications}
\label{sec:UAV_specs}
Consider a single fixed-wing UAV, equipped with a GPS location device and a gimbaled, omnidirectional camera. 
The camera is steered by
a low-level controller, which is independent of the vehicle motion
controller. For simplicity, we neglect the possibility of camera occlusions.  
The work herein focuses on high-level UAV trajectory planning, rather than low-level vehicle or camera motion control. 
We consider planar motion in a global, ground-plane reference frame, assuming the UAV maintains a fixed altitude $a$ and a fixed speed $s$. 
We model the UAV as a 
Dubins vehicle~\cite{LED:57} with minimum turning radius $r$, 
neglecting dynamic effects caused by
wind, etc.
Let $v_0 \in \real^2 \times [0,2\pi)$ denote the UAV's initial configuration (location, heading).

\subsection{Target Specifications}
\label{sec:target}
Consider $M$ static targets, each with associated imaging, i.e., visibility and dwell-time, requirements.
Each target $T_j$ is associated with the following (fixed) parameters:
\begin{enumerate}
\item $t_j \in \real^2$, the location of the target in the ground-plane reference frame,
\item $\param{BEH}_j \in \{\param{ANY}, \param{ANGLE}, \param{FULL}\}$, the required viewing behavior, where $\param{ANY}$ indicates no preference for the azimuth of the collected images, $\param{ANGLE}$ indicates that the target should be imaged at a specific azimuth, and $\param{FULL}$ indicates that a 360-degree view of the target should be provided,
\item $\left[\supscr{\phi_j}{A}-\supscr{\Delta_j}{A},\supscr{\phi_j}{A}+\supscr{\Delta_j}{A}\right] \subset \real$, a range of acceptable azimuths when $\param{BEH}_j = \param{ANGLE}$, as measured with respect to a reference ray in the ground plane (Fig.~\ref{fig:parameters}, top), 
\item $\left[\supscr{\phi_j}{T} - \supscr{\Delta_j}{T}, \supscr{\phi_j}{T}+\supscr{\Delta_j}{T}\right] \subset (0,\frac{\pi}{2}]$, acceptable camera tilt angles as measured with respect to a plane parallel to the ground-plane (Fig.~\ref{fig:parameters}, bottom), and
\item  $\tau_j \in \mathbb{Z}_{\geq 0}$, the required number of dwell-time ``loops.''
\end{enumerate}
\begin{figure}
\centering
\includegraphics[width = 0.56\columnwidth]{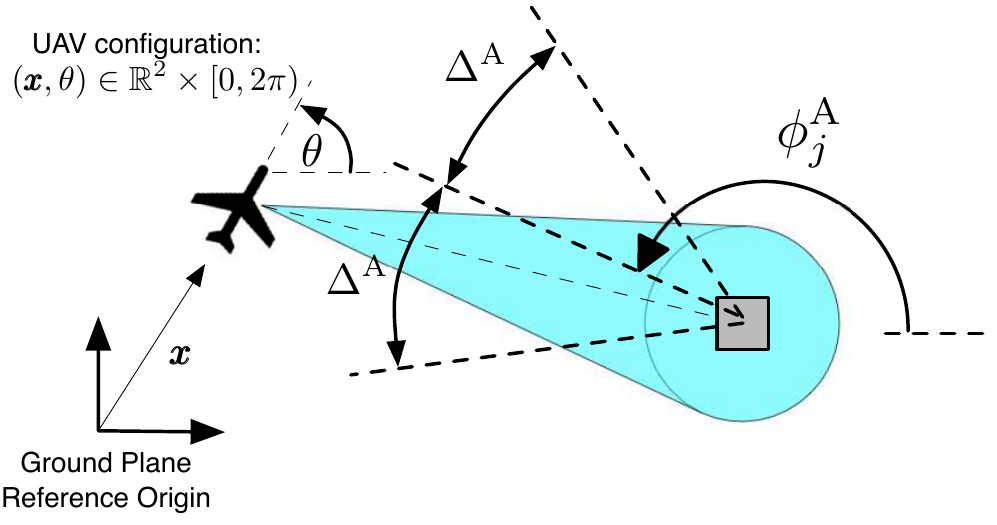}\\

\vspace{5mm}
\includegraphics[width = 0.65\columnwidth]{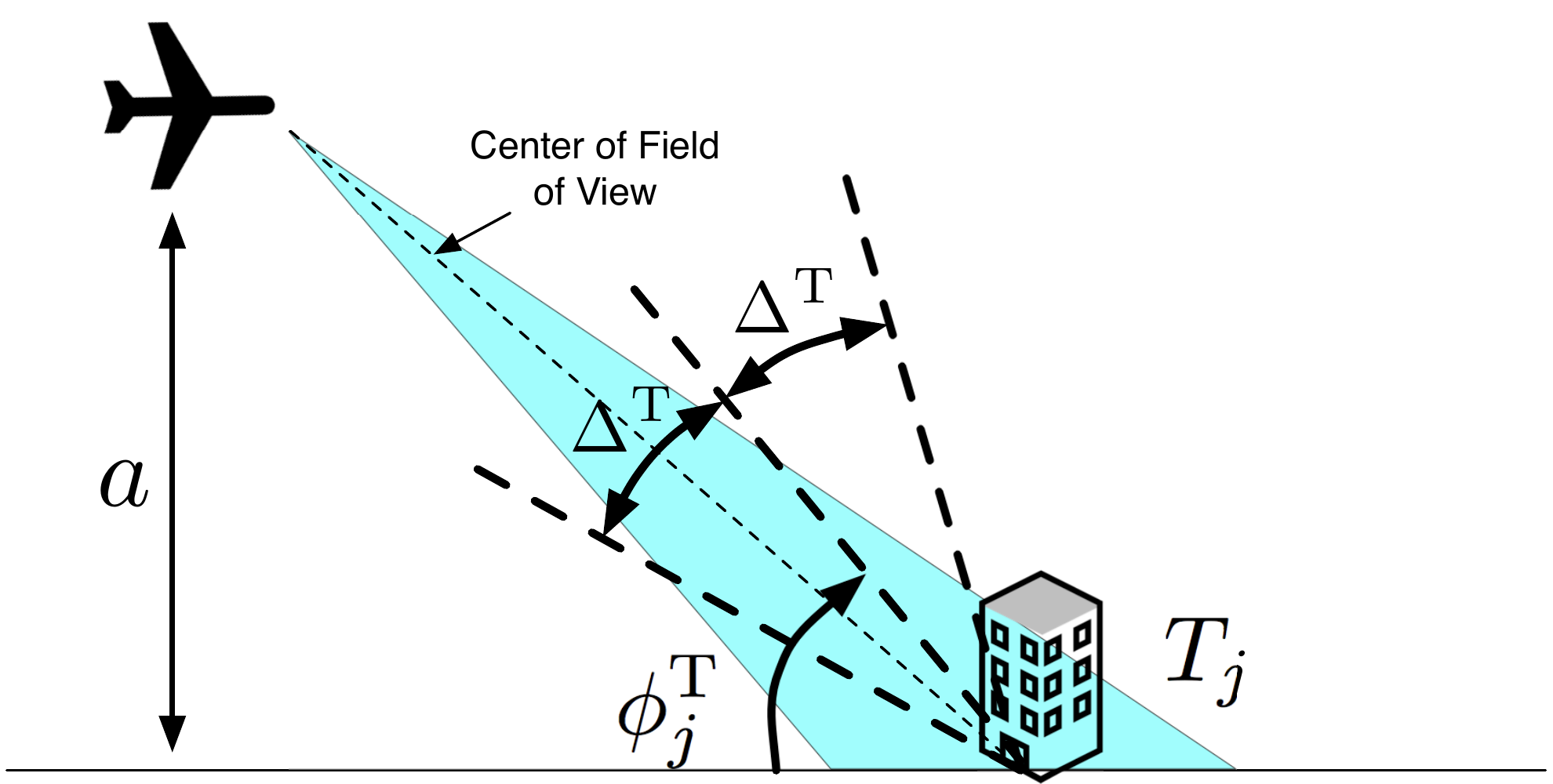} 
\caption{Illustration of key imaging parameters associated target $T_j$.  }
\label{fig:parameters}
\end{figure}
Define the \emph{visibility region}, $\texttt{VIS}_j \subset \real^2$, for target $T_j$ as the set of locations from which the UAV is able to image the target with an acceptable tilt angle and azimuth (that is, a tilt angle within the interval $\left[\supscr{\phi_j}{T} - \supscr{\Delta_j}{T}, \supscr{\phi_j}{T}+\supscr{\Delta_j}{T}\right]$ and, if $\param{BEH}_j = \param{ANGLE}$, an azimuth within the interval $\left[\supscr{\phi_j}{A}-\supscr{\Delta_j}{A},\supscr{\phi_j}{A}+\supscr{\Delta_j}{A}\right]$; if $\param{BEH}_j \neq \param{ANGLE}$, then any azimuth is acceptable).
Each $\texttt{VIS}_j$ is uniquely defined by the UAV altitude $a$, the location $t_j$, the behavior $\param{BEH}_j$, and the intervals $\left[\supscr{\phi_j}{T} - \supscr{\Delta_j}{T}, \supscr{\phi_j}{T}+\supscr{\Delta_j}{T}\right]$, $\left[\supscr{\phi_j}{A}-\supscr{\Delta_j}{A},\supscr{\phi_j}{A}+\supscr{\Delta_j}{A}\right]$. Algorithm~\ref{alg:visibility} presents the methodology for constructing visibility regions.
Note that, if $\param{BEH}_j \neq \param{ANGLE}$, then $\texttt{VIS}_j$ is a full annulus centered at $t_j$; otherwise, $\texttt{VIS}_j$ is an annular sector (Fig.~\ref{fig:visibility}). 
\begin{figure}
\centering
\includegraphics[width = 0.38\columnwidth]{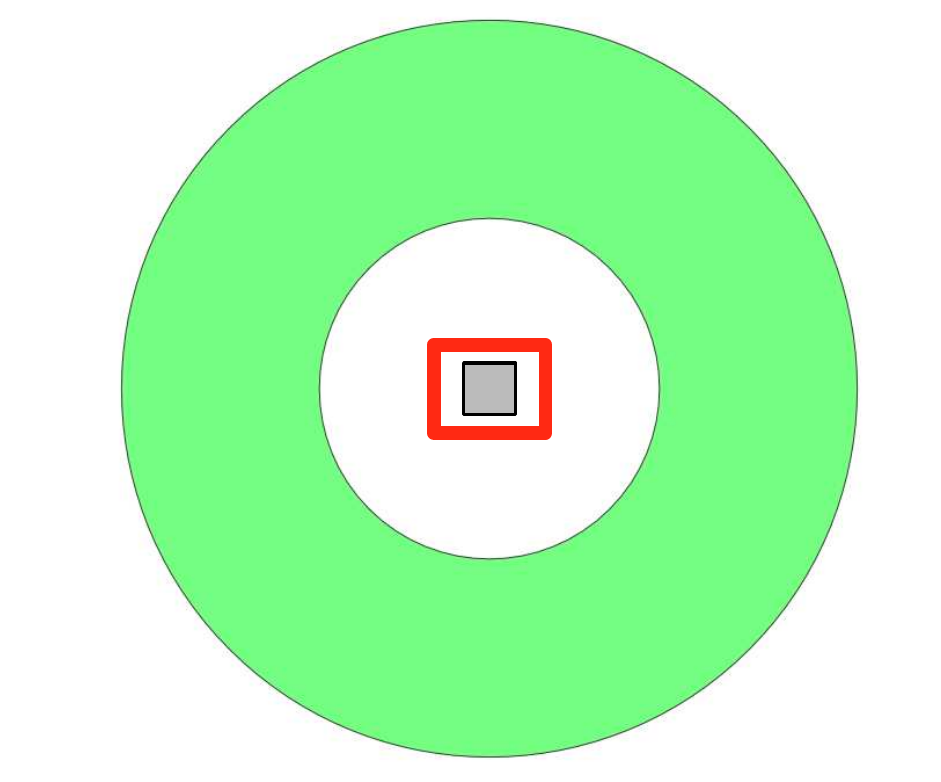}\hspace{3mm}\vrule\hspace{3mm}
\includegraphics[width = 0.43\columnwidth]{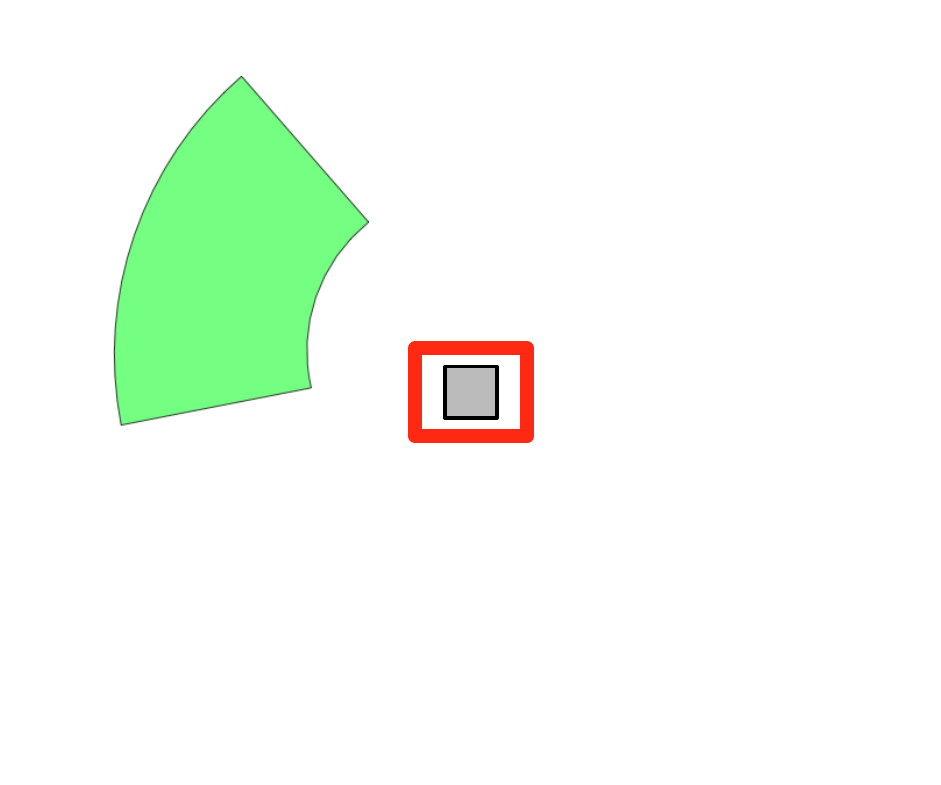} 
\caption{An example visibility region $\texttt{VIS}_j$ when $\param{BEH}_j \neq \param{ANGLE}$ (left), and when $\param{BEH}_j = \param{ANGLE}$ (right).}
\label{fig:visibility}
\end{figure}
As such, fixing target locations, each visibility region is parameterized by two radii (lower, upper radial limits) together with two angles (lower, upper angular limits).
%
\SetAlgorithmName{Algorithm}{ of Algorithms}{Algorithms}
\IncMargin{.0em}
\begin{algorithm}[h]
 {\footnotesize
  \SetKwInOut{Input}{Input}
   \SetKwInOut{Output}{Output}
  \SetKwInOut{Set}{Set}
  \SetKwInOut{Title}{Algorithm}
  \SetKwInOut{Require}{Require}
  \Input{a; $\supscr{\phi_j}{T}, \supscr{\phi_j}{A}, \supscr{\Delta_j}{T}, \supscr{\Delta_j}{A}$ for each $j \in \until M$} 
  \Output{$\{\texttt{VIS}_j\}_{j \in \until M}$}
  \BlankLine
  \For{ Each $T_j$}{
  	\eIf{$\param{BEH}_j \neq \param{ANGLE}$}{
		\nl Define $\texttt{VIS}_j$ as the annulus in $\real^2$ centered at $t_j$ with\\ \hspace{3mm} radial limits $a/\tan(\supscr{\phi_j}{T}\pm\supscr{\Delta_j}{T})$
	}{\nl Define $\texttt{VIS}_j$ as the annular sector in $\real^2$ centered at $t_j$  \\ \hspace{3mm} with radial limits $a/\tan(\supscr{\phi_j}{T}\pm\supscr{\Delta_j}{T})$, and angular  \\ \hspace{3mm} limits equivalent to $\supscr{\phi_j}{A}\pm\supscr{\Delta_j}{A}$.
	}
  }
\nl \Return $\{\texttt{VIS}_j\}_{j \in \until M}$
    \caption{\textit{Visibility Region Construction}}
 \label{alg:visibility}}
\end{algorithm} 
\DecMargin{.0em}

The variable $\tau_j \in \mathbb{Z}_{\geq 0}$ indicates the number of dwell-time ``loops'' that are required at the target $T_j$. 
If $\tau_j = 0$, then the UAV accomplishes its task by passing over any point within $\texttt{VIS}_j$. If $\tau_j >0$, i.e., non-trivial dwell time is specified, 
assume the UAV images $T_j$ as follows: If  $\param{BEH}_j = \param{FULL}$, the UAV makes $\tau_j$ full circles around the target location at some constant radius;
if $\param{BEH}_j \neq \param{FULL}$, then the UAV selects a pivot point within $\texttt{VIS}_j$ and makes $\tau_j$ circles about the selected point at radius $r$. 
Each non-trivial dwell-time maneuver must be performed entirely within the appropriate visibility region.
Fig.~\ref{fig:path} shows examples of acceptable imaging behavior for various choices of $\tau_j$ and $\param{BEH}_j$. 
\begin{figure*}
\centering
\includegraphics[width = 0.19\textwidth]{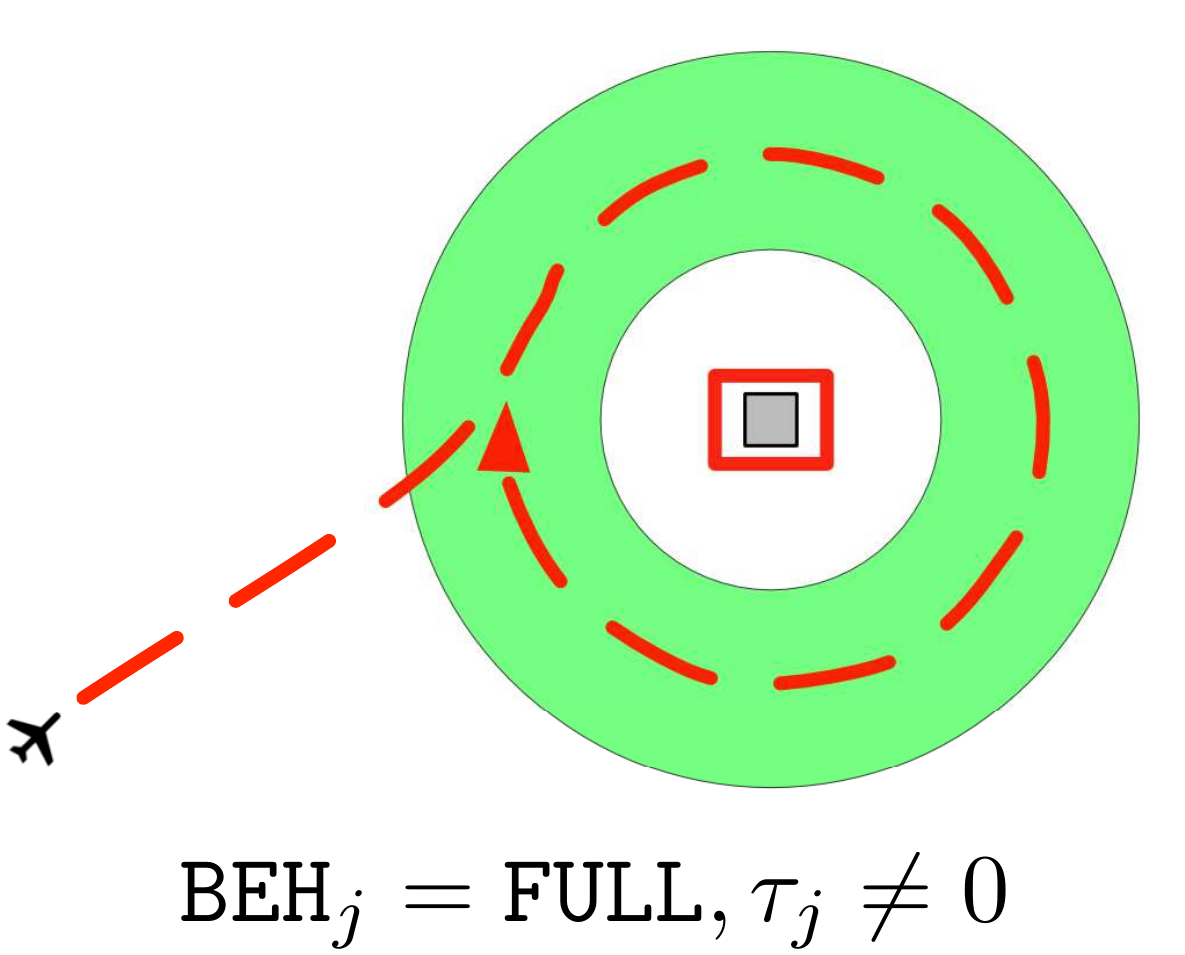} \vrule
\includegraphics[width = 0.19\textwidth]{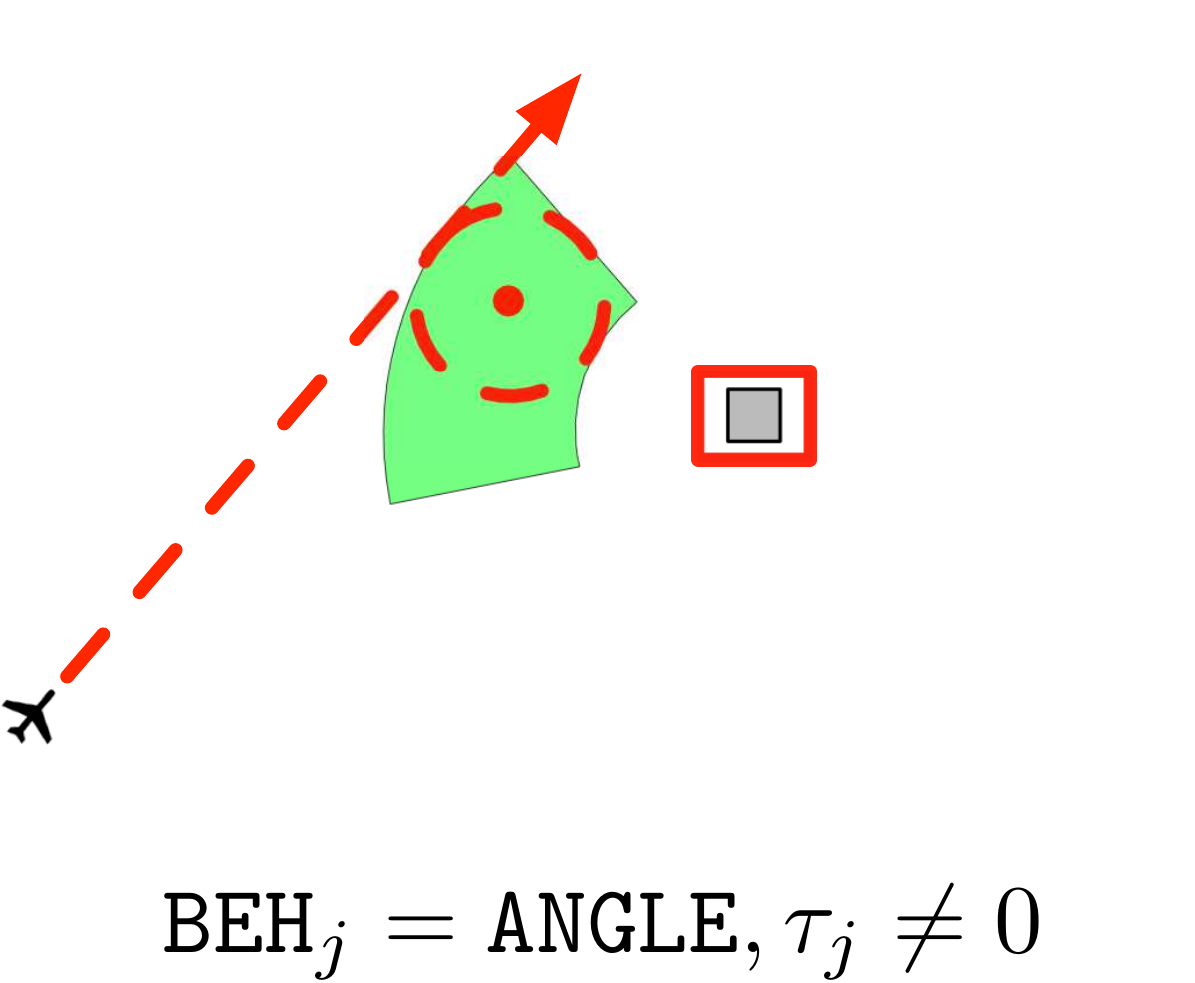}\vrule
\includegraphics[width = 0.19\textwidth]{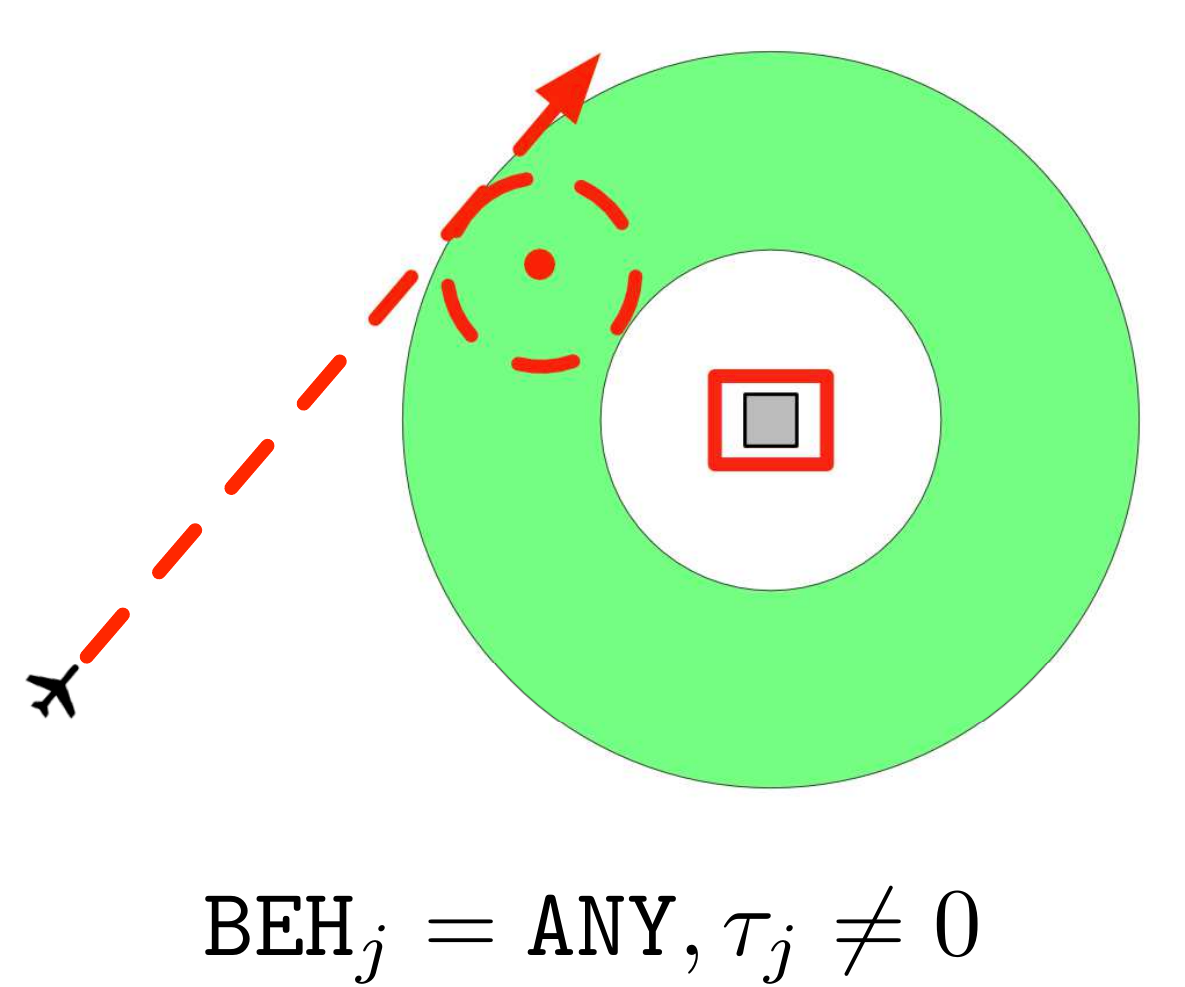}\vrule
\includegraphics[width = 0.19\textwidth]{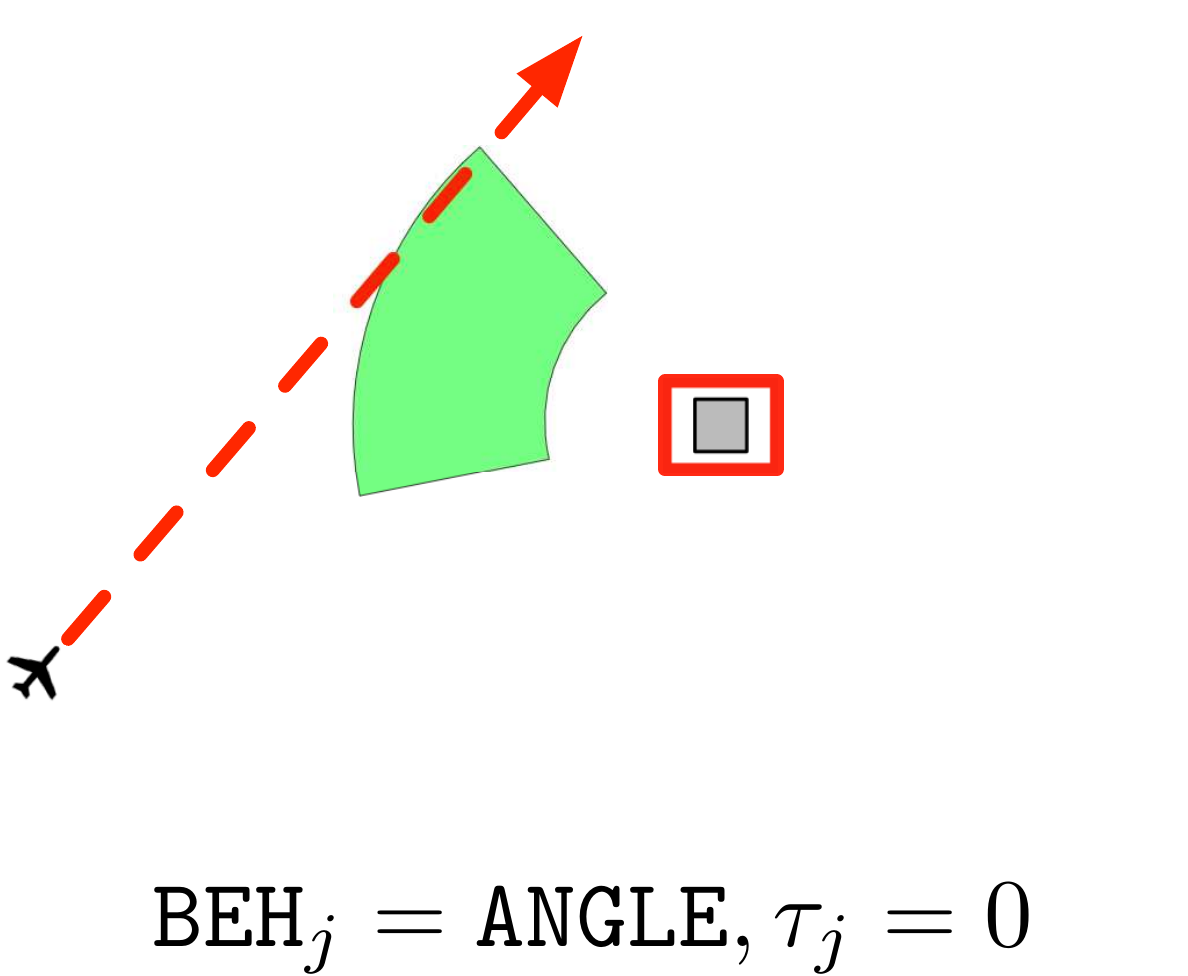}\vrule
\includegraphics[width = 0.19\textwidth]{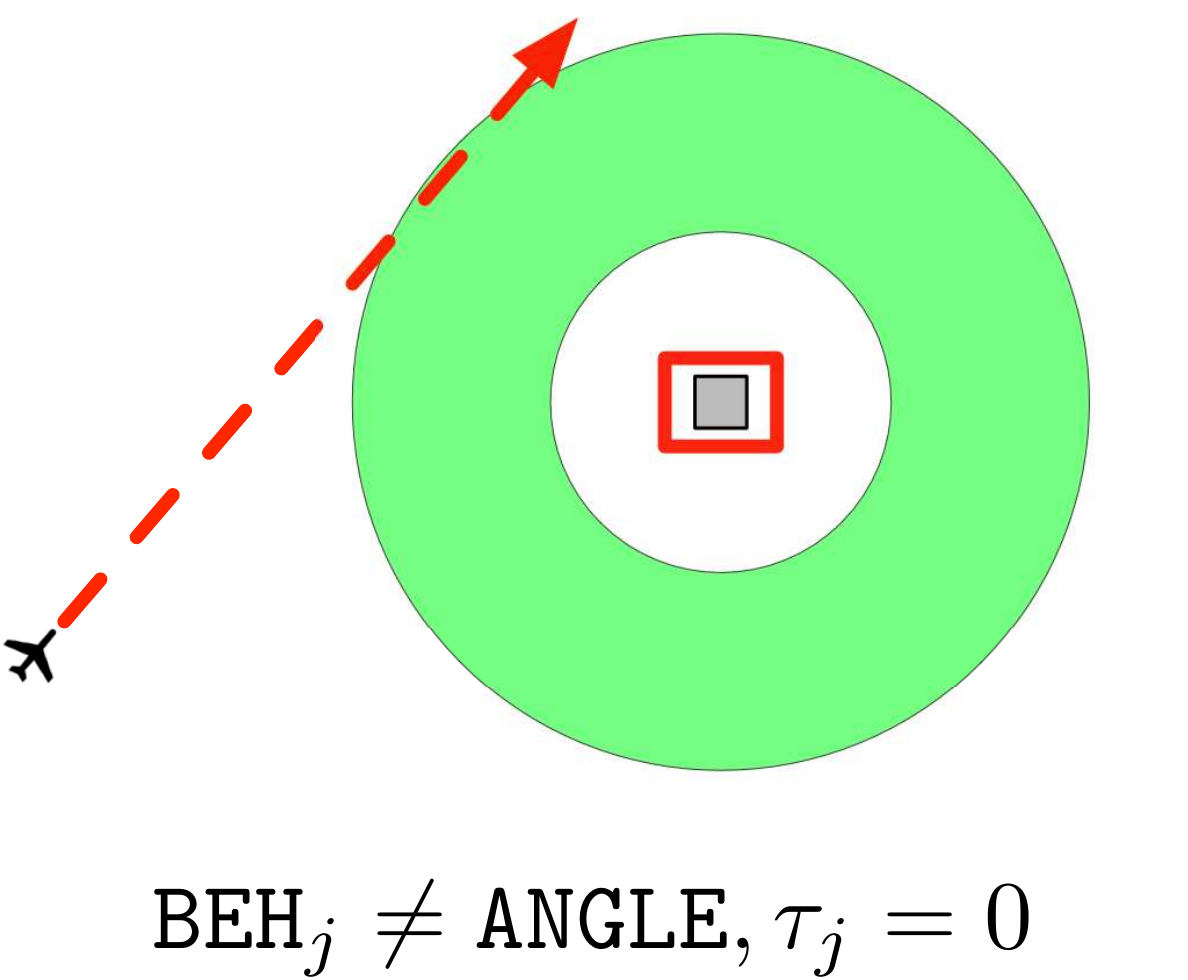}
\caption{Example imaging behaviors at target $T_j$ for various choices of $\param{BEH}_j$ and $\tau_j$.}
\label{fig:path}
\end{figure*}
\begin{figure*}
\centering
\includegraphics[width = 0.19\textwidth]{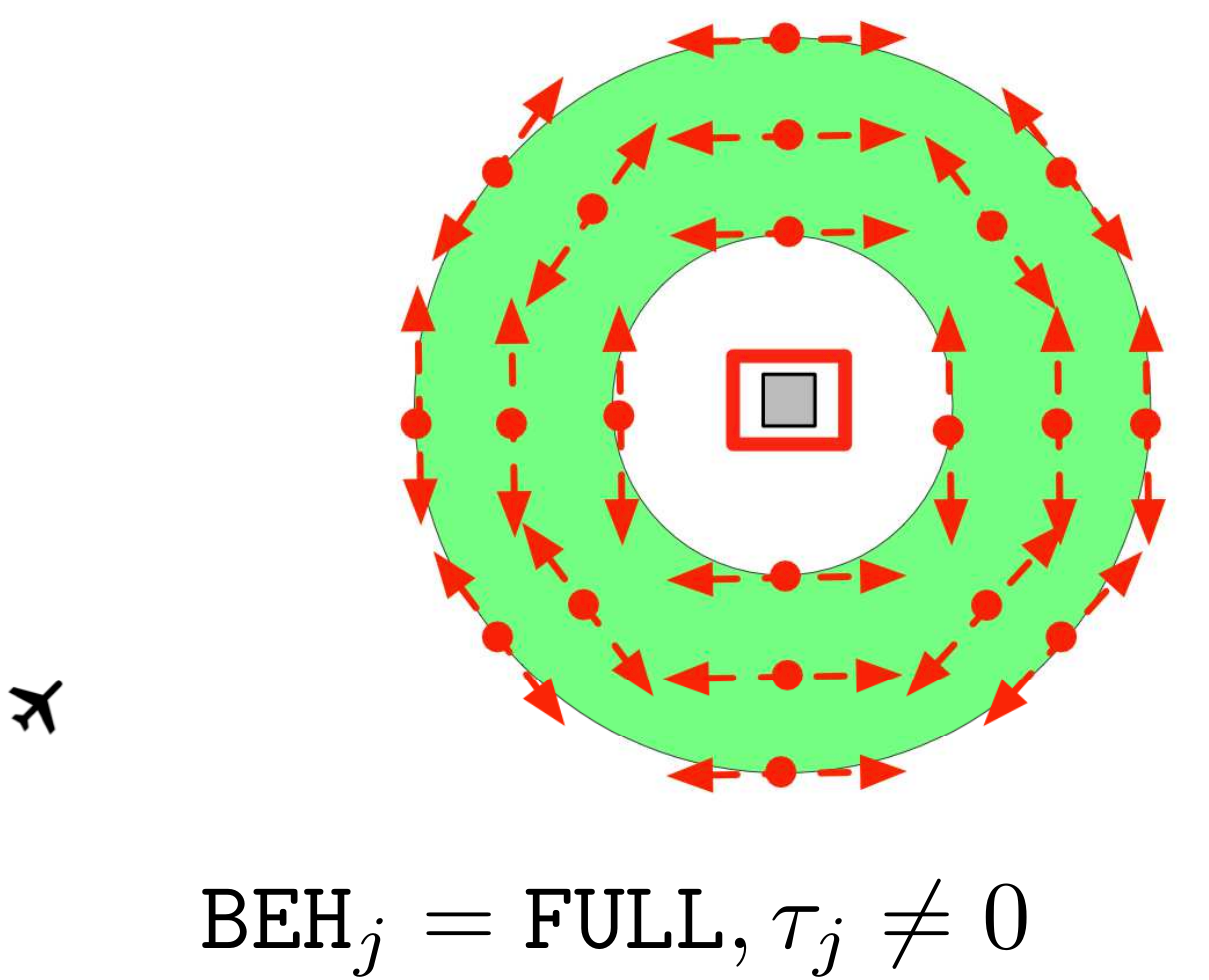}\vrule
\includegraphics[width = 0.19\textwidth]{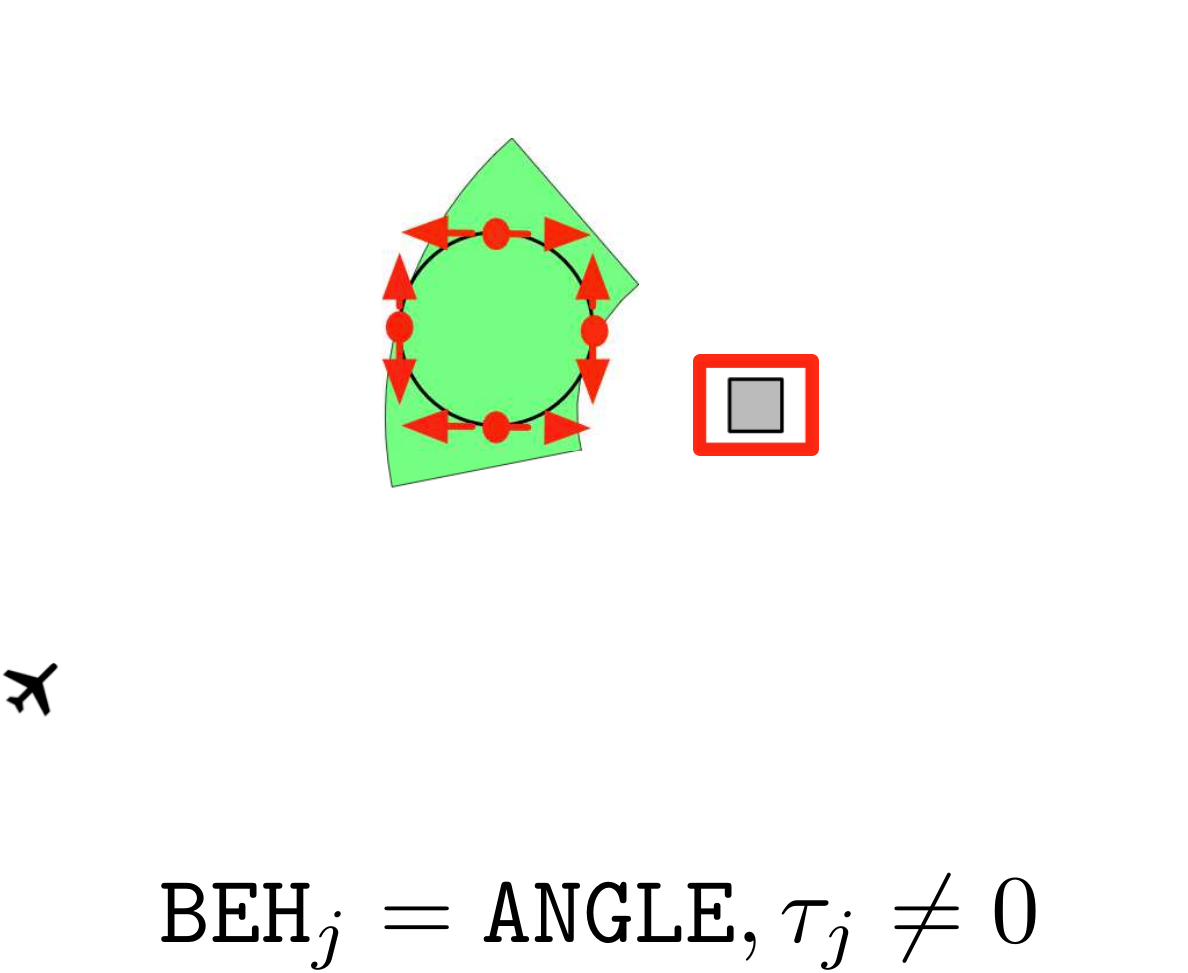}\vrule
\includegraphics[width = 0.19\textwidth]{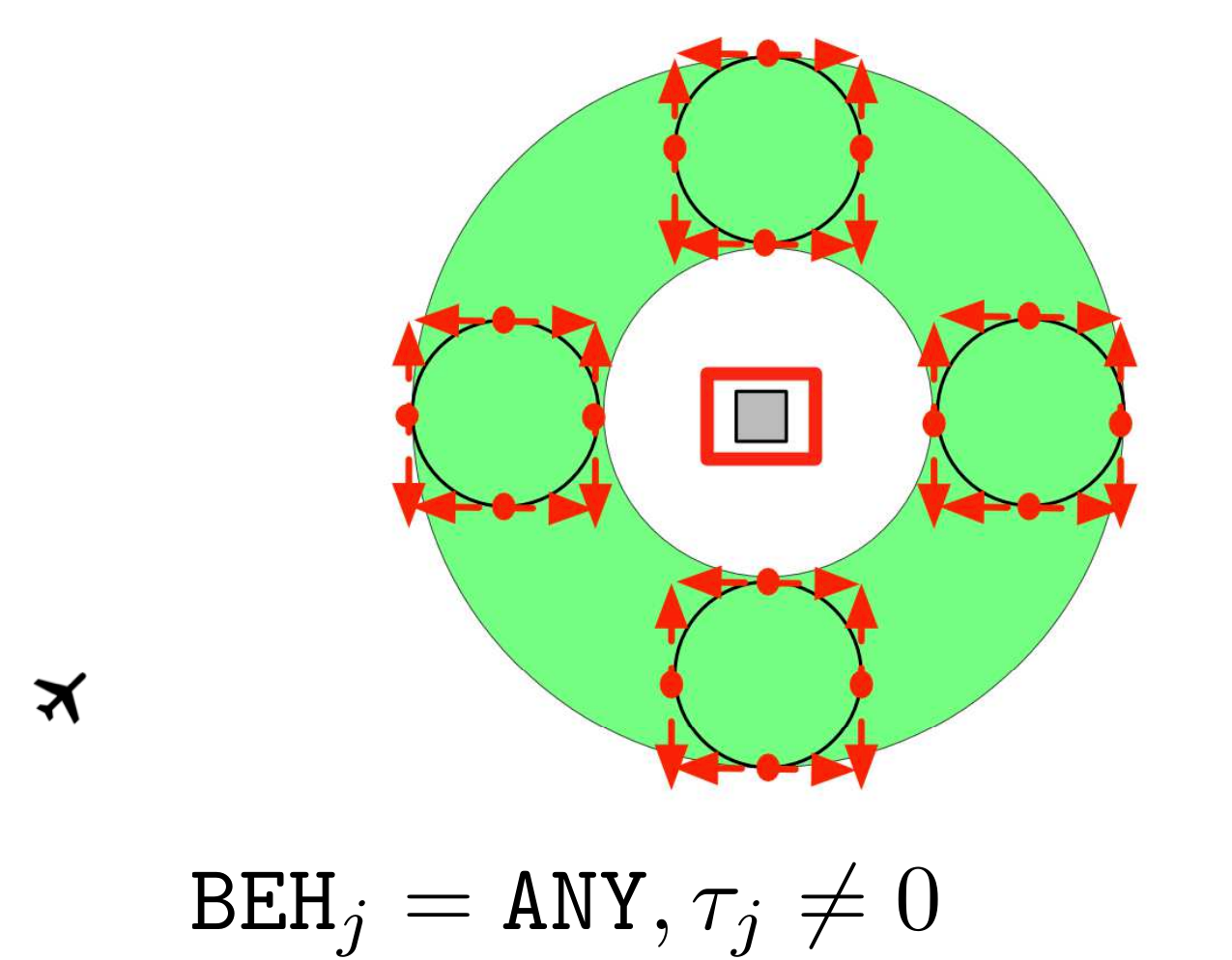}\vrule
\includegraphics[width = 0.19\textwidth]{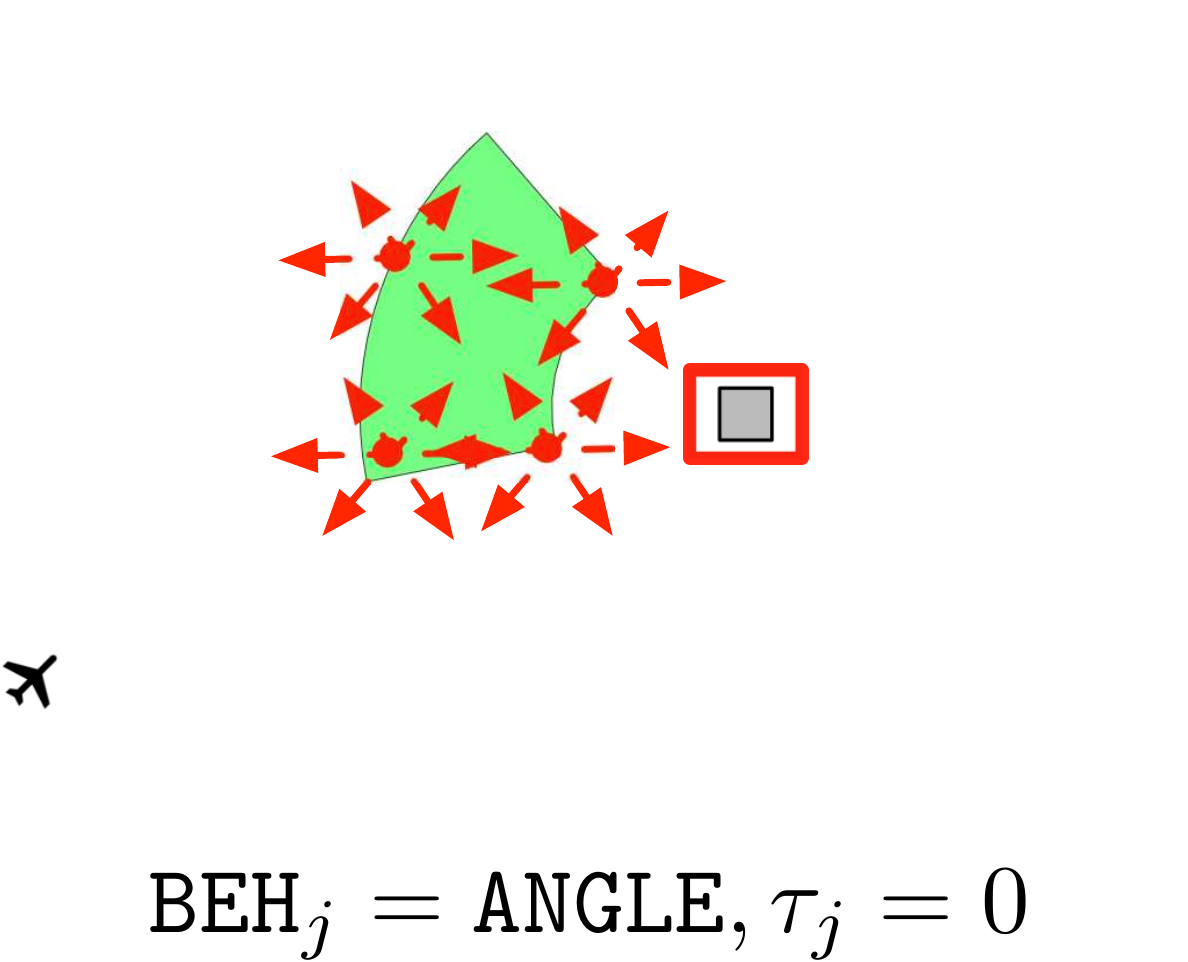}\vrule
\includegraphics[width = 0.19\textwidth]{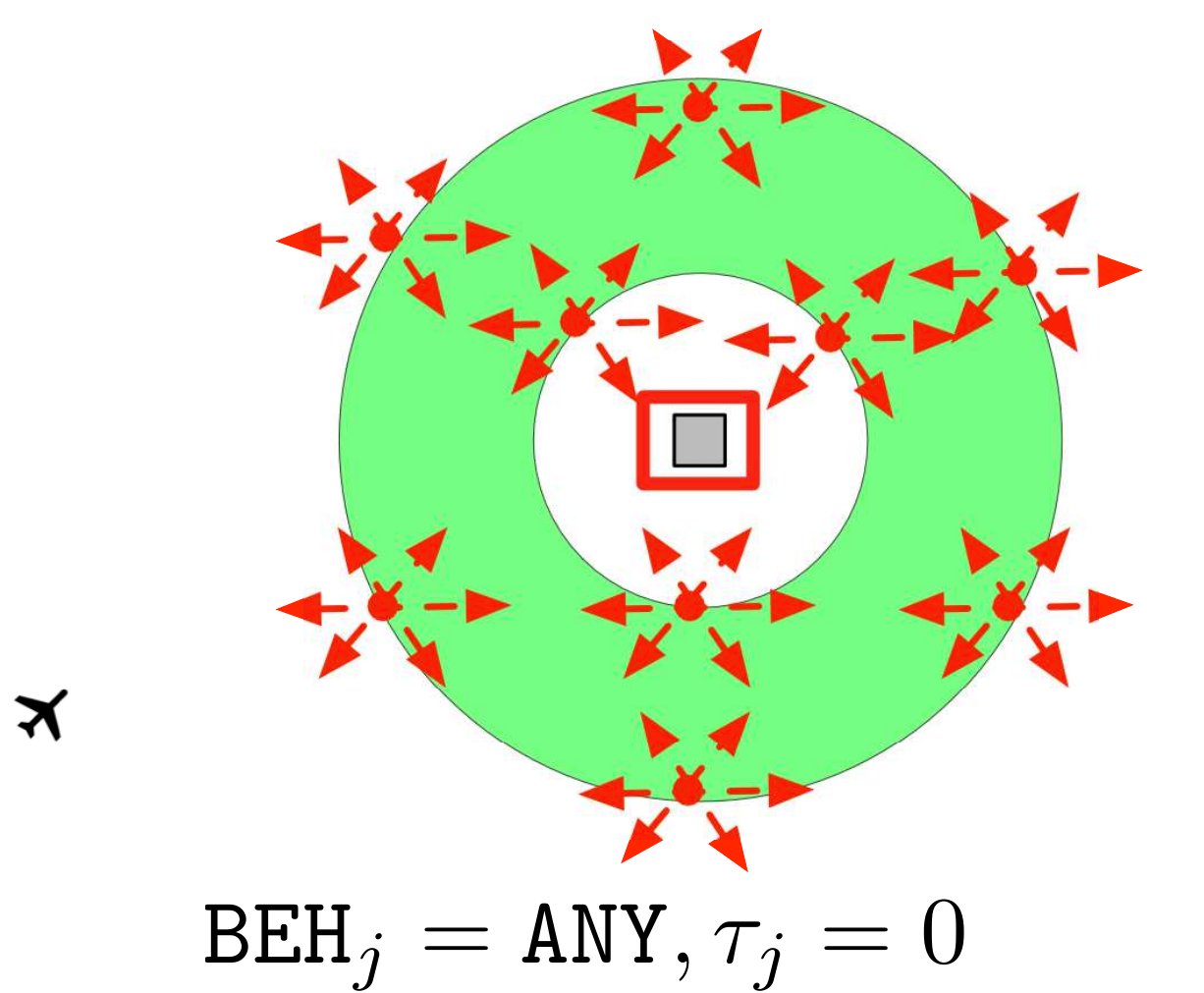}
\caption{Examples of valid configuration samples associated with $T_j$ for various choices of $\param{BEH}_j$ and $\tau_j$.}
\label{fig:example}
\end{figure*}

For the remaining analysis, we assume that imaging parameters are chosen to ensure problem feasibility, i.e., there exists at least $1$ dwell-time maneuver at each target satisfying the aforementioned constraints. 
\begin{remark}[Feasibility]
Feasibility is always ensured if tolerance parameters are sufficiently large. 
\label{rem:feasible}
\end{remark}

\subsection{Problem Statement}
\label{sec:goals}
The goal is to construct an optimal UAV trajectory with the following characteristics: The UAV begins its tour by moving from its initial configuration to a configuration where it can begin imaging a target and, after the initial maneuver, the UAV follows a closed trajectory, along which it images each target according to the required specifications. Note that by separating the initial maneuver from the remaining closed route, we ensure that the target imaging behavior can be effectively repeated if desired. Recall the performance metrics to be minimized: (i) the time required for the UAV to traverse the closed portion of the  generated trajectory (beginning and ending at the first target), and (ii) the time required for the UAV to perform its initial maneuver, i.e., move from $v_0$ to the starting point of the closed portion. Since the metrics are conflicting in general, a tradeoff must be made.
We consider the following formulation of the route-finding problem.
\begin{problem}[Optimal UAV Tour]
Find a UAV tour (consisting of an initial maneuver, and a closed trajectory) that solves the following optimization problem:
\begin{equation}
\begin{split}
\text{Minimize:}\hspace{3mm}&\text{Closed Trajectory Time},\\
\text{Subject To:}\hspace{3mm}& \text{Initial Maneuver Time} \leq \epsilon,\\
&\text{Dynamic Constraints Satisfied (Sec.~\ref{sec:UAV_specs}), and}\\
&\text{Correct Dwell-Time Maneuvers Performed} \\&\hspace{5mm}\text{at Each Target (Sec.~\ref{sec:target}),}
\end{split} 
\label{eqn:multi_opt_1}
\end{equation}
\label{prob:continuous}
where $\epsilon \geq 0$ is a constant parameter. 
\end{problem}
\begin{remark}[Scalarization]
The optimization problem~\eqref{eqn:multi_opt_1} is an $\epsilon$-constraint scalarization of the multi-objective problem. A typical alternative scalarization would instead account for the initial maneuver time within a linear objective: $\alpha \left(\text{Initial Maneuver Time}\right) + \beta \left(\text{Closed Trajectory Time}\right)$ where $\alpha$, $\beta$ are constant parameters. For fixed $\alpha, \beta$, this alternative formulation is equivalent to~\eqref{eqn:multi_opt_1} for some choice of $\epsilon$. Further, parameter selection and subsequent solution of the linear alternative typically requires construction of a Pareto optimal front by solving instances of~\eqref{eqn:multi_opt_1}.
\label{rem:scalarization}
\end{remark}
\begin{remark}[Relation to~\cite{KJO-PO-SD:12}] 
If (i) $\epsilon$ is large (the initial maneuver is inconsequential), and (ii) $\tau_j = 0$ for all $j$ (dwell-times are trivial), then solving Problem~\ref{prob:continuous} is equivalent to solving a \emph{Polygon-Visiting Dubins Traveling Salesperson Problem}, as in~\cite{KJO-PO-SD:12}. Our methods are loosely based on~\cite{KJO-PO-SD:12}, though we consider a more general multi-objective framework that also incorporates non-trivial dwell-times. 
\label{rem:existing}
\end{remark}

\section{Discrete Approximation}
\label{sec:discrete}

Problem~\ref{prob:continuous}, which explicitly considers all target imaging constraints, has an infinite number of potential solutions and is difficult to solve. However, by carefully sampling the UAV configuration space, we can pose a finite, discrete alternative whose optimal solutions approximate those of the original problem. The discrete approximation is still a combinatorial search; however, it is closely related to standard path-finding problems, allowing us to leverage existing solvers to produce high-quality sub-optimal solutions.
This section develops the discrete approximation of interest. 

\subsection{Configuration Space Sampling}
\label{sec:discretize}

Recalling the Dubins model, we sample the UAV configuration space to obtain a finite collection of points of the form $v \coloneqq(x,\theta) \in \real^2\times [0,2\pi)$. These points serve as the basis for the discrete approximation to Problem~\ref{prob:continuous}. Specifically, we choose a set of points that 
each represent the starting and ending configuration of an appropriate dwell-time ``loop'' at some target (valid dwell-time maneuvers each start and end at the same configuration).  That is, each sampled point $v\coloneqq(x,\theta)$ has heading $\theta$ that points in a direction tangent to a valid dwell-time loop (associated with some target $\param{TAR}_v$) passing through the location $x \in \param{VIS}_j$ (Remark~\ref{rem:correspondence}). 
By explicitly pairing each $v$ with its target $\param{TAR}_v$, this procedure creates a natural one-to-one mapping between the generated points and a set of feasible dwell-time maneuvers. As such, subsequent graph formulations can  ``disregard'' dwell-time constraints by using an augmented graph distance. 
Fig.~\ref{fig:example} shows examples of valid sampled sets associated with some $T_j$, for various $\param{BEH}_j$ and $\tau_j$ values. Here, the red dot is the sampled point's location and the arrow represents its heading (distinct points can have the same planar location). 

Algorithm~\ref{alg:discretize} outlines the sampling process. Here, each set $\texttt{DWL}_j$ is defined thusly: If $\tau_j \neq 0$, let $\texttt{DWL}_j$ be the set of points $v \coloneqq (x, \theta) \in \texttt{VIS}_j \times [0,2\pi)$ having location $x$ that lies on the circular image of an appropriate dwell-time maneuver and heading $\theta$ that points in a direction tangent to the same circular image at $x$. If $\tau_j = 0$, let $\texttt{DWL}_j \coloneqq \texttt{VIS}_j \times [0,2\pi)$. Notice Algorithm~\ref{alg:discretize} allows multiple ``copies'' of the same configuration be sampled, provided each is associated with a distinct target, i.e., there may exist $v_{k_1}, v_{k_2} \in V$ with identical locations and headings so long as $\param{TAR}_{v_{k_1}} \neq \param{TAR}_{v_{k_2}}$. 
%
%
\SetAlgorithmName{Algorithm}{ of Algorithms}{Algorithms}
\IncMargin{.0em}
\begin{algorithm}[h]
 {\footnotesize
  \SetKwInOut{Input}{Input}
  \SetKwInOut{Output}{Output}
  \SetKwInOut{Set}{Set}
  \SetKwInOut{Title}{Algorithm}
  \SetKwInOut{Require}{Require}
  \Input{$N \in \mathbb{N}$; $a$; $\texttt{VIS}_j$, $\tau_j$ for all $j \in \until M$} 
  \Output{$V$, $\{\param{TAR}_v\}_{v \in V}$}
  \BlankLine
\nl Initialize $V = \emptyset$\;
\nl    \For{ Each $T_j$}{
\nl 	Construct and parameterize $\texttt{DWL}_j$ by considering the images \\\hspace{3mm} of valid dwell-time maneuvers at target $T_j$\;
\nl		\For{$k \in \until N$}{
\nl 			Sample $v_k\in \texttt{DWL}_j$, associate the target $T_j$ to $v_k$, \\\hspace{3mm}i.e., define $\texttt{TAR}_{v_k} \coloneqq T_j$, and add $v_k$ to $V$\;
		}
  }
  \nl \Return $V$, $\{\param{TAR}_v\}_{v \in V}$
    \caption{\textit{Configuration Space Sampling}}
 \label{alg:discretize}}
\end{algorithm} 
\DecMargin{.0em}
\begin{remark}[Dwell-Time Loops]
It is possible that some $v\in V$ is tangent to multiple, distinct dwell-time loops associated with $\param{TAR}_v$. Notice that all such loops have identical radii, i.e.,  the UAV requires the same amount of time to traverse each. Thus, we can assume without loss of generality that each $v$ is the starting and ending configuration of a single loop associated with $\param{TAR}_v$.
\label{rem:correspondence}
\end{remark}

\subsection{Graph Construction}
\label{sec:graph}

Algorithm~\ref{alg:discretize} returns a discrete set $V$, where each $v\in V$ represents a point in the UAV configuration space $\real^2 \times [0,2\pi)$ that is the starting and ending point of a valid dwell-time maneuver at $\param{TAR}_v$. Recalling that the optimal Dubins path between any two configurations $v_{k_1}, v_{k_2} \in \real^2 \times [0,2\pi)$ is well defined (and easily computed)~\cite{LED:57}, we utilize Algorithm~\ref{alg:graph} to construct a weighted, directed graph $G \coloneqq (V\cup\{v_0\},E,W)$ that effectively discretizes the solution space of Problem~\ref{prob:continuous}. Here, the edge set $E$ contains directed edges connecting each pair of nodes in $V$, along with directed edges connecting the initial UAV configuration $v_0$ with each node in $V$. Weights are defined via an augmented distance that includes both the time required to complete the dwell-time maneuver at the source node and the time required to travel between configurations (recall that optimal Dubins paths are asymmetric in general). We are now ready to formally define the discrete approximation to Problem~\ref{prob:continuous} using the graph $G$. 
\SetAlgorithmName{Algorithm}{ of Algorithms}{Algorithms}
\IncMargin{.0em}
\begin{algorithm}[ht]
 {\footnotesize
  \SetKwInOut{Input}{Input}
  \SetKwInOut{Output}{Output}
  \SetKwInOut{Set}{Set}
  \SetKwInOut{Title}{Algorithm}
  \SetKwInOut{Require}{Require}
  \Input{$N \in \mathbb{N}$;  $V$, $\{\param{TAR}_v\}_{v \in V}$; $v_0$, $s$, $a$, $r$} 
  \Output{$G\coloneqq(V\cup\{v_0\}, E,W)$}
  \BlankLine
  \nl Initialize the edge set $E = \emptyset$\;
\For{ Each pair of distinct points $v_{k_1}, v_{k_1} \in V$}{
\nl Add the directed edges $(v_{k_1}, v_{k_2})$ and $(v_{k_2}, v_{k_1})$ to $E$\;
\nl Set the weight $W\left(v_{k_1}, v_{k_2}\right)$ equal to the sum of: \\\hspace{3mm}(i) \hspace{1mm}the time required to perform the dwell-time \\\hspace{10mm}maneuver associated with $v_{k_1}$, and\\ \hspace{3mm}(ii) the time required to traverse the optimal Dubins \\\hspace{10mm}path from $v_{k_1}$ to $v_{k_2}$\;
\nl Set the weight $W\left(v_{k_2}, v_{k_1}\right)$ equal to the sum of: \\\hspace{3mm}(i) \hspace{1mm}the time required to perform the dwell-time \\\hspace{10mm}maneuver associated with $v_{k_2}$, and\\ \hspace{3mm}(ii) the time required to traverse the optimal Dubins \\\hspace{10mm}path from $v_{k_2}$ to $v_{k_1}$\;
}
\nl Add the initial UAV configuration $v_0$ to the node set of $G$\;
\nl    \For{ Each node $v \in V$}{
\nl Add the directed edges $(v_0, v)$ to $E$\;
\nl Set the weight $W\left(v_0, v\right)$ equal to the time required to \\\hspace{3mm}traverse the optimal Dubins path from $v_0$ to $v$\;
}
\nl \Return $G = (V\cup \{v_0\}, E, W)$
  }
    \caption{\textit{Graph Construction}}
 \label{alg:graph}
\end{algorithm} 
\begin{problem}[Discrete Approximation]
Consider the graph $G\coloneqq(V \cup \{v_0\},E,W)$ resulting from Algorithm~\ref{alg:graph}. Find a sequence $v_1, v_2, \ldots, v_M \in V$ that solves
\begin{equation}
\begin{split}
\text{Minimize:}\hspace{3mm}&W(v_M, v_1) + \sum_{k = 1}^{M-1} W\left(v_k,v_{k+1}\right)\\
\text{Subject To:}\hspace{3mm}&W\left(v_0, v_1\right) \leq \epsilon,\text{ and}\\
&\texttt{TAR}_{v_{k_1}} \neq \texttt{TAR}_{v_{k_2}},\text{ for any } k_1 \neq k_2,
\end{split} 
\label{eqn:multi_opt_graph}
\end{equation}
where $v_0\in \real^2 \times [0,2\pi)$ corresponds to the initial UAV configuration and $\epsilon \geq 0$ is a constant parameter.
\label{prob:graph}
\end{problem}

\section{UAV Tour Construction}
\label{sec:optimal}

Notice that solutions to Problem~\ref{prob:continuous} can be recovered from solutions to Problem~\ref{prob:graph}. Indeed, given a solution $v_1, \ldots, v_M$ to \eqref{eqn:multi_opt_graph}, we recover a feasible solution to~\eqref{eqn:multi_opt_1} by: (i) concatenating the optimal Dubins paths between adjacent nodes in the sequence (appending the path from $v_0$ to $v_1$ at the beginning and the path from $v_M$ with $v_1$ at the end) and (ii) appending the dwell-time trajectory associated to each node $v_1, \ldots, v_M$.
The remainder of our analysis studies the discrete approximation (Problem~\ref{prob:graph}) and its relation to the continuous formulation (Problem~\ref{prob:continuous}).

\subsection{Solving the Discrete Problem}
\label{sec:discrete_solve}
We leverage solutions of a classic graph path-finding problem to construct solutions to~\eqref{eqn:multi_opt_graph}. In particular, we propose a heuristic framework that relates solutions of~\eqref{eqn:multi_opt_graph} to those of an (asymmetric) \emph{Generalized Traveling Salesperson Problem} (GTSP), which is defined for convenience here. 
\begin{problem}[GTSP]
Given a complete, weighted, directed graph $\mathcal{G} \coloneqq (\mathcal{V}, \mathcal{E}, \mathcal{W})$, and a family of finite, non-empty subsets $\{\mathcal{V}_j\subseteq \mathcal{V}\}_{j \in \until M}$, find a minimum weight, closed path that visits exactly one node from each subset $\mathcal{V}_j$.
\label{prob:gtsp}
\end{problem}
\begin{remark}[GTSP Formulation]
A common alternative GTSP formulation requires the closed path to visit \emph{at least} one node from each $\mathcal{V}_j$. If edge weights satisfy a triangle inequality, then this alternative and Problem~\ref{prob:gtsp} are identical. In what follows, we define GTSP instances on a complete subgraph $\mathcal{G}\subseteq G$, where $G$ is the graph constructed in Algorithm~\ref{alg:graph}. In this case, edge weights in $\mathcal{G}$ represent an augmented Dubins distance and, since the Dubins distance function satisfies a triangle inequality~\cite{JTI-JPH:13}, edge weights in $\mathcal{G}$ also satisfy a triangle inequality. Thus, we consider Problem~\ref{prob:gtsp} without loss of generality.
\label{rem:formulation}
\end{remark}

\begin{remark}[GTSP Solutions]
The standard GTSP is NP-hard. However, practical strategies exist for quickly constructing high-quality solutions, e.g., transformation of the problem into a standard ATSP and application of a heuristic solver (see Section~\ref{sec:literature}).

\label{rem:gtsp}
\end{remark}
Note that, in general, Problem~\ref{prob:graph} is \emph{not} equivalent to a GTSP, due to the constraint on the initial maneuver. We can, however, leverage GTSP solution procedures in constructing solutions to the constrained problem. Indeed, a heuristic procedure for constructing solutions to Problem~\ref{prob:graph} using the solutions to related GTSP instances is outlined in Algorithm~\ref{alg:discrete}.
\begin{algorithm}[h]
 {\footnotesize
  \SetKwInOut{Input}{Input}
  \SetKwInOut{Output}{Output}
  \SetKwInOut{Set}{Set}
  \SetKwInOut{Title}{Algorithm}
  \SetKwInOut{Require}{Require}
  \Input{$G = (V\cup\{v_0\},E,W)$, $\{\param{TAR}_v\}$} 
  \Output{$v_1, \ldots, v_M$}
  \BlankLine
  \nl Construct the set $\texttt{INL}_\epsilon \coloneqq \{v \in V\;|\; W(v_0, v) \leq \epsilon\}$\;
  \eIf{$\texttt{INL}_\epsilon$ is empty}{
  \nl \Return ``Problem~\ref{prob:graph} Infeasible''}{
  \nl Select a subset $\texttt{INL}_\epsilon^* \subseteq \texttt{INL}_\epsilon$, whose elements are all \\\hspace{3mm}associated with a single target $T_{\hat{\jmath}}$\;
  \nl Construct the subgraph $\mathcal{G} \coloneqq (\mathcal{V}, \mathcal{E}, \mathcal{W})\subseteq G$ that is \\\hspace{3mm}induced by the node set $\mathcal{V} \subseteq V$, where \\\hspace{3mm}$\mathcal{V} \coloneqq V \backslash \{v\in V\,|\, \texttt{TAR}_v = T_{\hat{\jmath}}, v \notin \texttt{INL}^*_\epsilon\}$\;
  \nl Formulate and solve the GTSP (Problem~\ref{prob:gtsp}) using the graph $\mathcal{G}$ \\\hspace{3mm} and subsets $\mathcal{V}_j \coloneqq \{v \in \mathcal{V}\,|\,\texttt{TAR}_v = T_j\}$\;
  \nl Cyclically permute the GTSP solution to obtain a sequence \\\hspace{3mm}of nodes $v_1, v_2, \ldots, v_M$ with $\texttt{TAR}_{v_1} = T_{\hat{\jmath}}$\;
 \nl \Return $v_1, \ldots, v_M$}}
    \caption{\textit{Heuristic Solution to Problem~\ref{prob:graph}}}
     \label{alg:discrete}
\end{algorithm} 
Here,  $\texttt{INL}_\epsilon$ denotes the set of all nodes in $V$ that can be reached from $v_0$ in time less than $\epsilon$. 
In general, the sequences produced by Algorithm~\ref{alg:discrete} will not be optimal with respect to Problem~\ref{prob:graph}. They will, however, be feasible. Further, if $\texttt{INL}_\epsilon$ has a particular structure, then the subset $\texttt{INL}_\epsilon^*$ (see Algorithm~\ref{alg:discrete}) can be chosen to ensure that the GTSP instance (line $5$) is equivalent to Problem~\ref{prob:graph}. The following results make this discussion precise.
\begin{theorem}[Feasibility]
Algorithm~\ref{alg:discrete} produces a feasible solution to Problem~\ref{prob:graph}.
\label{thm:feasibility}
\end{theorem}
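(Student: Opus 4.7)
The plan is to verify that whenever Algorithm~\ref{alg:discrete} returns a sequence $v_1, \ldots, v_M$, that sequence satisfies both constraints of Problem~\ref{prob:graph}: (i) $W(v_0, v_1) \leq \epsilon$, and (ii) $\texttt{TAR}_{v_{k_1}} \neq \texttt{TAR}_{v_{k_2}}$ for every $k_1 \neq k_2$. The argument will split on the algorithm's two branches. When $\texttt{INL}_\epsilon$ is empty, no node $v\in V$ satisfies $W(v_0,v)\leq\epsilon$, so Problem~\ref{prob:graph} admits no feasible solution and the algorithm correctly reports infeasibility. Otherwise, I must show that the constructed GTSP is well-posed and that its solution, after cyclic permutation, meets both constraints.

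First, I would establish that the GTSP instance in line~5 is well-posed. The reduced node set $\mathcal{V}$ discards only those nodes associated with $T_{\hat\jmath}$ that lie outside $\texttt{INL}_\epsilon^\ast$, retaining every sample associated with the other targets. Hence each subset $\mathcal{V}_j \coloneqq \{v\in\mathcal{V}:\texttt{TAR}_v = T_j\}$ is non-empty: for $j\neq\hat\jmath$ this follows from the standing feasibility assumption (Remark~\ref{rem:feasible}), which guarantees at least one sampled configuration per target, and for $j=\hat\jmath$ it follows from $\mathcal{V}_{\hat\jmath} = \texttt{INL}_\epsilon^\ast$ being non-empty by the way $\texttt{INL}_\epsilon^\ast$ is selected within the non-empty branch. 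Because $\mathcal{G}$ is a complete weighted directed subgraph with non-empty subsets $\mathcal{V}_j$, a closed tour visiting exactly one node per $\mathcal{V}_j$ exists.

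Next, constraint (ii) would follow immediately from the defining property of a GTSP tour: it visits exactly one node from each $\mathcal{V}_j$, $j\in\until{M}$, and since the subsets partition $\mathcal{V}$ according to the target label $\texttt{TAR}_{(\cdot)}$, no two tour nodes can share a target. Cyclic permutation in line~6 is a relabeling that preserves this property, so $\texttt{TAR}_{v_{k_1}} \neq \texttt{TAR}_{v_{k_2}}$ for all $k_1\neq k_2$. For constraint (i), the permutation is specifically chosen so that $\texttt{TAR}_{v_1} = T_{\hat\jmath}$, and by the line~4 construction every $T_{\hat\jmath}$-associated node retained in $\mathcal{G}$ lies in $\texttt{INL}_\epsilon^\ast \subseteq \texttt{INL}_\epsilon$. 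Therefore $v_1\in \texttt{INL}_\epsilon$, which by definition gives $W(v_0, v_1) \leq \epsilon$.

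No step above is technically deep; the main obstacle, modest as it is, is bookkeeping. I need to check carefully that the node-removal in line~4 never empties any $\mathcal{V}_j$ (handled by retaining all samples for $j\neq\hat\jmath$ and by non-emptiness of $\texttt{INL}_\epsilon^\ast$ for $j=\hat\jmath$), and that the cyclic permutation in line~6 is well-defined (which is guaranteed because the GTSP tour contains exactly one node from $\mathcal{V}_{\hat\jmath}$, providing a unique anchor for the rotation).
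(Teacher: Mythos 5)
Your proof is correct and follows essentially the same route as the paper's: the GTSP tour contains exactly one node per target (giving the distinct-target constraint), and the cyclic permutation anchors the tour at a node of $T_{\hat\jmath}$, which by the line-4 pruning must lie in $\texttt{INL}_\epsilon^* \subseteq \texttt{INL}_\epsilon$, giving $W(v_0,v_1)\leq\epsilon$. The paper states this more tersely; your added bookkeeping (well-posedness of the GTSP instance and the empty-$\texttt{INL}_\epsilon$ branch) is consistent with it.
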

\begin{proof}
The GTSP solution (line $5$) will contain some $v \in \texttt{INL}_\epsilon^*\subseteq \texttt{INL}_\epsilon$. Thus, the permutation operation in line $6$ will produce $v_1, v_2, \ldots, v_M$ with $v_1 \in \texttt{INL}_\epsilon$. It follows readily that the algorithmic output is feasible with respect to Problem~\ref{prob:graph}
\end{proof}
\begin{remark}[Feasibility]
Theorem~\ref{thm:feasibility} does not require construction of an optimal GTSP solution, i.e., the result holds as long as \emph{any} feasible GTSP solution is produced in line $5$.
\label{rem:feasibility}
\end{remark}
\begin{theorem}[Equivalence]
Consider Algorithm~\ref{alg:discrete}. Suppose $\texttt{INL}_\epsilon$ is nonempty and that there exists an index $\hat{\jmath} \in \until M$ satisfying either:
\begin{enumerate}
\item $\texttt{INL}_\epsilon \subseteq \{v \in V\;|\;\texttt{TAR}_v = T_{\hat{\jmath}}\}$, or 
\item $\{v \in V\;|\;\texttt{TAR}_v = T_{\hat{\jmath}}\} \subseteq \texttt{INL}_\epsilon$.
\end{enumerate}
If $\texttt{INL}_\epsilon^*$ (line 3) is chosen as the set of all nodes in $\texttt{INL}_\epsilon$ that are associated with $T_{\hat{\jmath}}$, then optimal solutions of the GTSP in line 5 map to those of Problem~\ref{prob:graph} via the operation in line $6$. That is, if a globally optimal solution to the GTSP in line 5 is produced, then the output $v_1,\ldots, v_M$ of Algorithm~\ref{alg:discrete} is a globally optimal solution to Problem~\ref{prob:graph}. 
\label{thm:gtsp}
\end{theorem}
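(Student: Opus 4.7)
The plan is to show that the GTSP instance built in line~5 on the subgraph $\mathcal{G}$ and Problem~\ref{prob:graph} have the same optimal value, and that the cyclic permutation in line~6 is a cost-preserving bijection between their feasible sets. The key observation is that the Problem~\ref{prob:graph} objective $W(v_M,v_1)+\sum_{k=1}^{M-1}W(v_k,v_{k+1})$ is cyclic, i.e.\ invariant under cyclic re-indexing of $v_1,\ldots,v_M$, so the constraint $W(v_0,v_1)\le\epsilon$ really only dictates which node of the closed cycle is designated as the ``starting node.'' The two case hypotheses are crafted so that forcing the $T_{\hat\jmath}$-representative of the cycle to lie in $\texttt{INL}_\epsilon^*$ is equivalent to forcing the starting node to lie in $\texttt{INL}_\epsilon$.

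First I would establish the inequality $\mathrm{OPT}(\mathrm{GTSP})\ge\mathrm{OPT}(\text{Problem~\ref{prob:graph}})$. Any feasible GTSP tour on $\mathcal{G}$ visits exactly one node from each $\mathcal{V}_j$; in particular it contains a node of $\mathcal{V}_{\hat\jmath}=\texttt{INL}_\epsilon^*$, and line~6 cyclically shifts the tour so that this node is relabeled $v_1$. Then $v_1\in\texttt{INL}_\epsilon$ gives $W(v_0,v_1)\le\epsilon$, while the one-per-group structure of the GTSP gives the constraint $\texttt{TAR}_{v_{k_1}}\neq\texttt{TAR}_{v_{k_2}}$ for $k_1\neq k_2$. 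Cyclic invariance of the objective shows that the Problem~\ref{prob:graph} cost of the permuted sequence matches the original GTSP tour cost, so Algorithm~\ref{alg:discrete} exhibits a Problem~\ref{prob:graph}-feasible sequence whose cost equals the GTSP optimum.

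The main obstacle is the reverse direction, which requires showing that every feasible Problem~\ref{prob:graph} tour $v_1,\ldots,v_M$ is also a feasible GTSP tour on $\mathcal{G}$ of the same cost. The one-per-group condition transfers immediately; the nontrivial point is to verify that each $v_k$ with $\texttt{TAR}_{v_k}=T_{\hat\jmath}$ actually lies in $\texttt{INL}_\epsilon^*$, since otherwise it was removed in line~4 when forming $\mathcal{V}$. This is precisely where the dichotomy on $\hat\jmath$ enters. In Case~1, the inclusion $\texttt{INL}_\epsilon\subseteq\{v\in V:\texttt{TAR}_v=T_{\hat\jmath}\}$ together with $v_1\in\texttt{INL}_\epsilon$ forces $\texttt{TAR}_{v_1}=T_{\hat\jmath}$; the one-per-group constraint of Problem~\ref{prob:graph} then guarantees that $v_1$ is the \emph{only} node of the tour assigned to $T_{\hat\jmath}$, and $v_1\in\texttt{INL}_\epsilon=\texttt{INL}_\epsilon^*$ as required. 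In Case~2, $\{v\in V:\texttt{TAR}_v=T_{\hat\jmath}\}\subseteq\texttt{INL}_\epsilon$ forces $\texttt{INL}_\epsilon^*=\{v\in V:\texttt{TAR}_v=T_{\hat\jmath}\}$, so line~4 removes no nodes and $\mathcal{V}=V$; feasibility as a GTSP tour in $\mathcal{G}$ is then automatic. Either way, the original Problem~\ref{prob:graph} tour is a GTSP-feasible tour on $\mathcal{G}$ with identical cost.

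Combining the two inequalities yields equality of the optima. Moreover, the explicit cost-preserving map ``solve the GTSP on $\mathcal{G}$, then cyclically permute to start at a node of $\mathcal{V}_{\hat\jmath}$'' sends any optimal GTSP solution to a Problem~\ref{prob:graph}-feasible sequence whose cost equals the common optimum, which is exactly the claim of the theorem.
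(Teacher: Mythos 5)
Your proof is correct and follows essentially the same route as the paper's: the paper likewise argues that under either condition the restriction to the subgraph $\mathcal{G}$ discards no feasible solutions of Problem~2, and that cyclic permutation preserves the (cyclic) tour cost, so GTSP optima map to optima of Problem~2. Your version merely makes the two-sided comparison of optimal values explicit, which the paper leaves implicit.
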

\begin{proof} 
Any solution $v_1, v_2, \ldots, v_M$ to Problem~\ref{prob:graph} must contain exactly one node associated to each target, where $v_1 \in \texttt{INL}_\epsilon$. If $\texttt{INL}_\epsilon$ only contains nodes associated to $T_{\hat{\jmath}}$ (condition (i)), then no feasible solution to Problem~\ref{prob:graph} contains any $v\notin \texttt{INL}_\epsilon$ with $\texttt{TAR}_v = T_{\hat{\jmath}}$. Thus, there is no loss of generality in considering the GTSP defined over the modified graph $\mathcal{G}$ when $\texttt{INL}_\epsilon^* = \texttt{INL}_\epsilon$ (as opposed to defining a GTSP over the subgraph induced by $V$). The same applies when $\texttt{INL}_\epsilon$ satisfies condition (ii) and $\texttt{INL}_\epsilon^*$ equals the set of all nodes associated with $T_{\hat{\jmath}}$, as this implies $\mathcal{V} = V$. Since cyclic permutation of the node sequence does not affect the cost, any optimal solution to the GTSP in line $5$ can be mapped to an optimal solution of Problem~\ref{prob:graph} via the operation in line $6$. 
\end{proof}
Figure~\ref{fig:thm} shows a graphical illustration of when the theorem conditions are met. We note that the conditions required by Theorem~\ref{thm:gtsp} are typically met whenever target spacing is large. 
\begin{figure}
\centering
\includegraphics[width = 0.45\columnwidth]{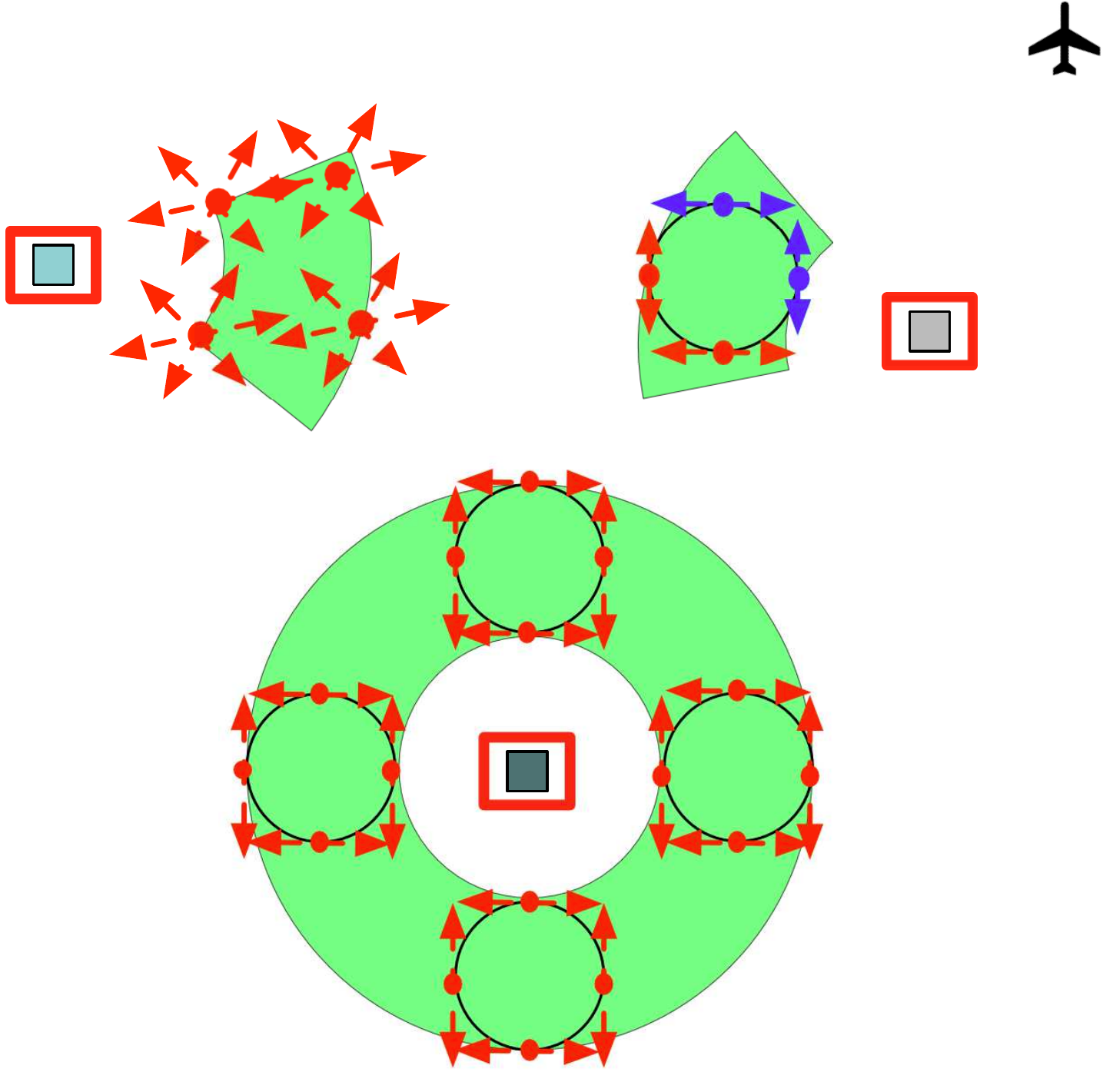}\hspace{3mm}\vrule\hspace{3mm}
\includegraphics[width = 0.45\columnwidth]{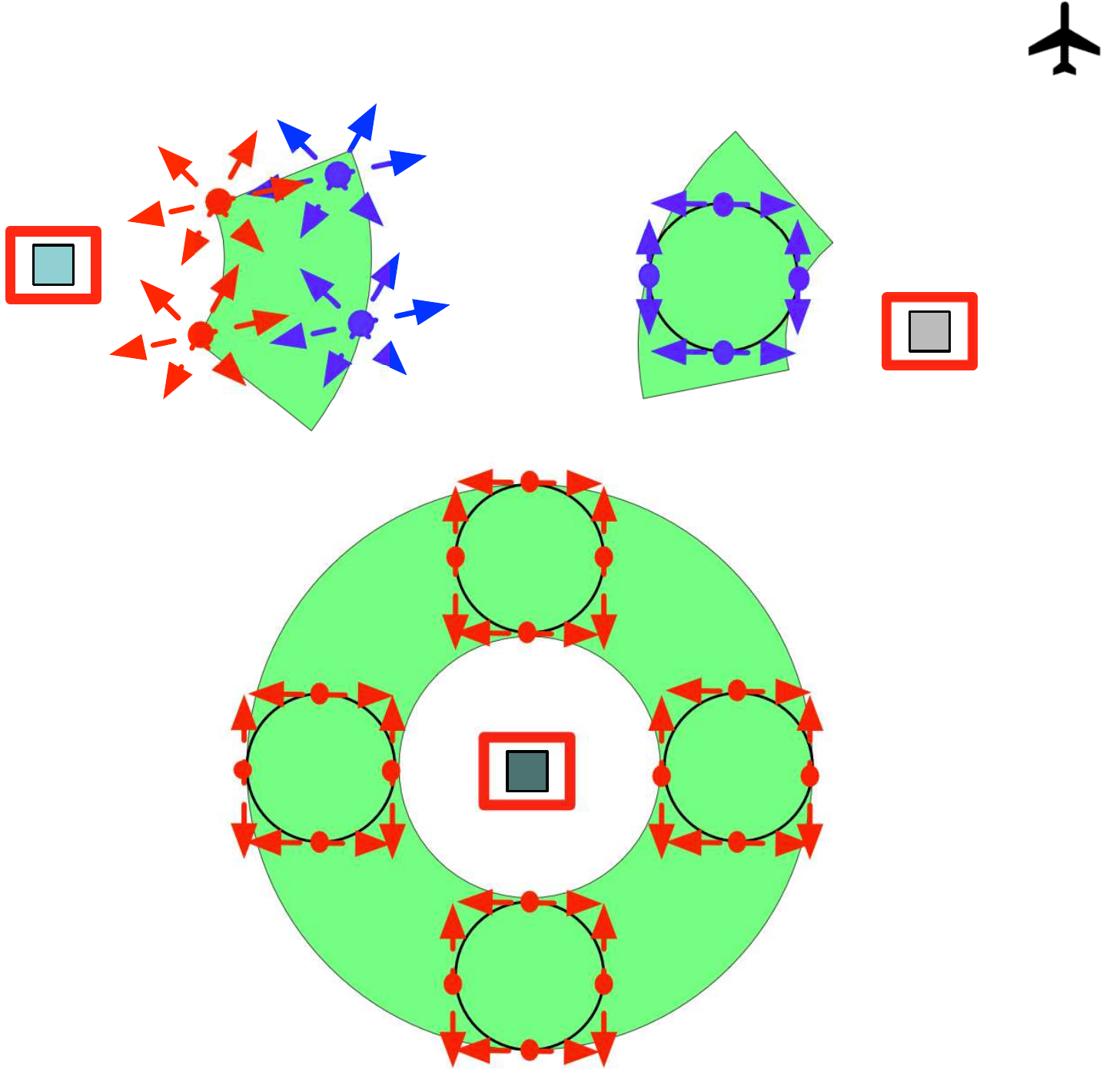}
\caption{Example of when $\texttt{INL}_\epsilon$ (blue nodes) satisfies condition 1 (left) and 2 (right) of Theorem~\ref{thm:gtsp}.}
\label{fig:thm}
\end{figure}

\subsection{Complete Tour Construction}
\label{sec:complete}
Algorithm~\ref{alg:complete} is a heuristic procedure that leverages solutions to the discrete approximation (Problem~\ref{prob:graph}) to  construct solutions to the full routing problem (Problem~\ref{prob:continuous}).
\SetAlgorithmName{Algorithm}{ of Algorithms}{Algorithms}
\begin{algorithm}[ht]
 {\footnotesize
  \SetKwInOut{Input}{Input}
  \SetKwInOut{Output}{Output}
  \SetKwInOut{Set}{Set}
  \SetKwInOut{Title}{Algorithm}
  \SetKwInOut{Require}{Require}
  \Input{$v_0,s,a,r; N \in \mathbb{N}; \{T_j\}_{j \in \until M}\;$} 
  \Output{Complete UAV Route}
  \BlankLine
  \textit{\% Create visibility regions}\;
  \nl  Create target visibility regions via Algorithm~\ref{alg:visibility}.
  \BlankLine
  \textit{\% Create the discrete approximation}\;
  \nl Sample the configuration space and create the graph $G$ via \\\hspace{3mm}Algorithms~\ref{alg:discretize} and~\ref{alg:graph}. Formulate Problem~\ref{prob:graph}\;
  \BlankLine
  \textit{\% Solve the discrete approximation}\;
  \nl Construct a solution $v_1, \ldots, v_M$ to Problem~\ref{prob:graph} via Algorithm~\ref{alg:discrete}\;
  \If{Algorithm~\ref{alg:discrete} returns an error (Problem~\ref{prob:graph} is infeasible)}
  {\nl \Return ``Error: Discrete Approximation Infeasible''}
  \BlankLine
  \textit{\% Convert the solution of Problem~\ref{prob:graph} into a solution to Problem~\ref{prob:continuous}}\;
  \nl Construct the optimal Dubins path that visits the nodes in the \\\hspace{3mm}following order: $v_0, v_1, \ldots, v_M, v_1$\; 
  \nl Append dwell-time maneuvers to recover a solution to Problem~\ref{prob:continuous}.\;
  \nl \Return UAV Route: Initial Maneuver + Closed Trajectory
  }
    \caption{\textit{Heuristic Tour Construction via GTSPs}}
 \label{alg:complete}
\end{algorithm} 

Solutions produced by Algorithm~\ref{alg:complete} are not optimal in general, though they will exhibit structural characteristics that generally improve in quality (with respect to Problem~\ref{prob:continuous}) as the sampling granularity is made increasingly fine. Indeed, Algorithm~\ref{alg:complete} is a \emph{resolution complete} in some sense, providing justification for the sampling-based approximation approach. A precise characterization of the resolution completeness properties is included as an Appendix.

\section{Numerical Examples}
\label{sec:example}
We illustrate our algorithms through numerical examples. For each simulated mission, solutions to Problem~\ref{prob:continuous} are constructed via Algorithm~\ref{alg:complete}, where GTSPs are solved through transformation into an equivalent ATSP (see~\cite{CEN-JCB:91a}) that is subsequently solved using the Lin-Kernighan heuristic (as implemented by LKH~\cite{KH:00}). In all cases, the set $\param{INL}_\epsilon^*$ (Algorithm~\ref{alg:discrete}) is chosen as the set of all points in $\param{INL}_\epsilon$ associated with some single target (satisfying Theorem~\ref{thm:gtsp} conditions whenever possible). A slightly modified version of Algorithm~\ref{alg:discretize} is used for sampling in which the number of samples, $N$, associated with each target is not fixed \emph{a priori}, but instead is determined by creating a grid of samples within the appropriate sampling subsets. The grid spacing is determined by $3$ parameters $\delta r$, $\delta \theta$, and $\delta \alpha$, which represent, loosely, the radial location spacing, the angular location spacing, and the angular heading spacing, resp. The parameters $\delta r$, $\delta \theta$, and $\delta \alpha$ are inversely proportional to the number of samples at each target, and the sampled set is dense in the limit as spacing parameters jointly tend to $0$.

\subsection{Pareto-Optimal Front}
\label{sec:first_example}
The first example is a $5$ target mission with the following UAV parameters: $r = 750$ m, $a = 1000$ m, $s = 39$ m/s, and $v_0 = ((-2500,500)\text{ m},0) \in \mathbb{R}^2\times [0,2\pi)$. Target parameters are shown in Table~\ref{tab:target_data} (tolerances are large enough for feasibility).
\setlength{\tabcolsep}{8pt}
\begin{table*}
\caption{Target Input Data}
\centering
\begin{tabular}
{cccccc}
\midrule\addlinespace[1pt]\midrule $T_j$& $t_j$ & $\mathtt{Beh}_j$ & $\tau_j$& $\left[\supscr{\phi_j}{A}-\supscr{\Delta_j}{A},\supscr{\phi_j}{A}+\supscr{\Delta_j}{A}\right]$ & $ \left[\supscr{\phi_j}{T}- \supscr{\Delta_j}{T},\supscr{\phi_j}{T}+\supscr{\Delta_j}{T}\right]$ \\
\midrule
$T_1$ & $(5000, -5000)\text{ m}$ & $\param{FULL}$ &$2$\hspace{2mm}& $-$ & $[\frac{\pi}{8}, \frac{3\pi}{8}]$ \\
$T_2$ & $(4300, -1750)\text{ m}$ &$\param{ANGLE}$  & $1$ & $[\frac{\pi}{4}, \frac{3\pi}{4}]$ & $[\frac{\pi}{8}, \frac{3\pi}{8}]$ \\
$T_3$ & $(0, 4000)\text{ m}$ &$\param{FULL}$  & $3$ & - & $[\frac{\pi}{8}, \frac{3\pi}{8}]$ \\
$T_4$ & $(-8000, -2000)\text{ m}$ & $\param{ANY}$ &$1$ & - & $[\frac{\pi}{8}, \frac{3\pi}{8}]$ \\
$T_5$ & $(-2000, 8000)\text{ m}$ &$\param{ANGLE}$  &$0$ & $[\frac{3\pi}{2}, 2\pi]$& $[\frac{\pi}{8}, \frac{3\pi}{8}]$
\end{tabular}
\label{tab:target_data}
\end{table*}
The approximate Pareto-optimal front (with respect to $\epsilon$) for Problem~\ref{prob:continuous} as a function of the sampling granularity is shown in Figure~\ref{fig:pareto}. 
The figure also shows illustrations of solutions produced at spacing condition 5 when $\epsilon = 65$ s (left) and $\epsilon = 205$ s (right). Recalling that (i) Theorem~\ref{thm:gtsp} does not hold for all $\epsilon$, and (ii) heuristic solvers for GTSPs do not guarantee global optima, to obtain the best approximation of the Pareto-optimal front, the following steps were taken to generate each curve: First, Algorithm~\ref{alg:complete} was called for a series of $\epsilon$ values, and the resulting initial maneuver/closed trajectory times were recorded. Then, in post-processing, the approximate Pareto-optimal curve was generated by selecting, for each $\epsilon$, the lowest cost route satisfying the initial maneuver constraint out of the solutions produced in the computation stage.  Note that increasing $\epsilon$ corresponds to relaxing the initial maneuver constraint, and thus the cost is non-increasing in $\epsilon$. Notice also that the Pareto-optimal fronts shift toward zero as the sampling spacing is decreased. 
\begin{table}
\caption{Spacing Conditions}
\begin{tabular}
{cccc}
\midrule\addlinespace[1pt]\midrule
Spacing Condition & $\delta r$& $\delta \theta$& $\delta \alpha$\\
\hline
1 & $1000 \text{ m}$& $\pi$ & $\pi$ \\
2 &  $500 \text{ m}$&  $\pi$ & $\pi$ \\
3 & $500 \text{ m}$&  $\pi/2$ & $\pi/2$ \\
4 & $250 \text{ m}$&  $\pi/2$ & $\pi/2$\\
5 & $250 \text{ m}$&  $\pi/4$ & $\pi/4$\\
6 & $125 \text{ m}$& $\pi/4$ & $\pi/4$\\
7 & $125 \text{ m}$& $\pi/8$ & $\pi/8$
\end{tabular}
\label{tab:pareto}
\end{table}
\begin{figure*}
\centering
\includegraphics[width = .7\columnwidth]{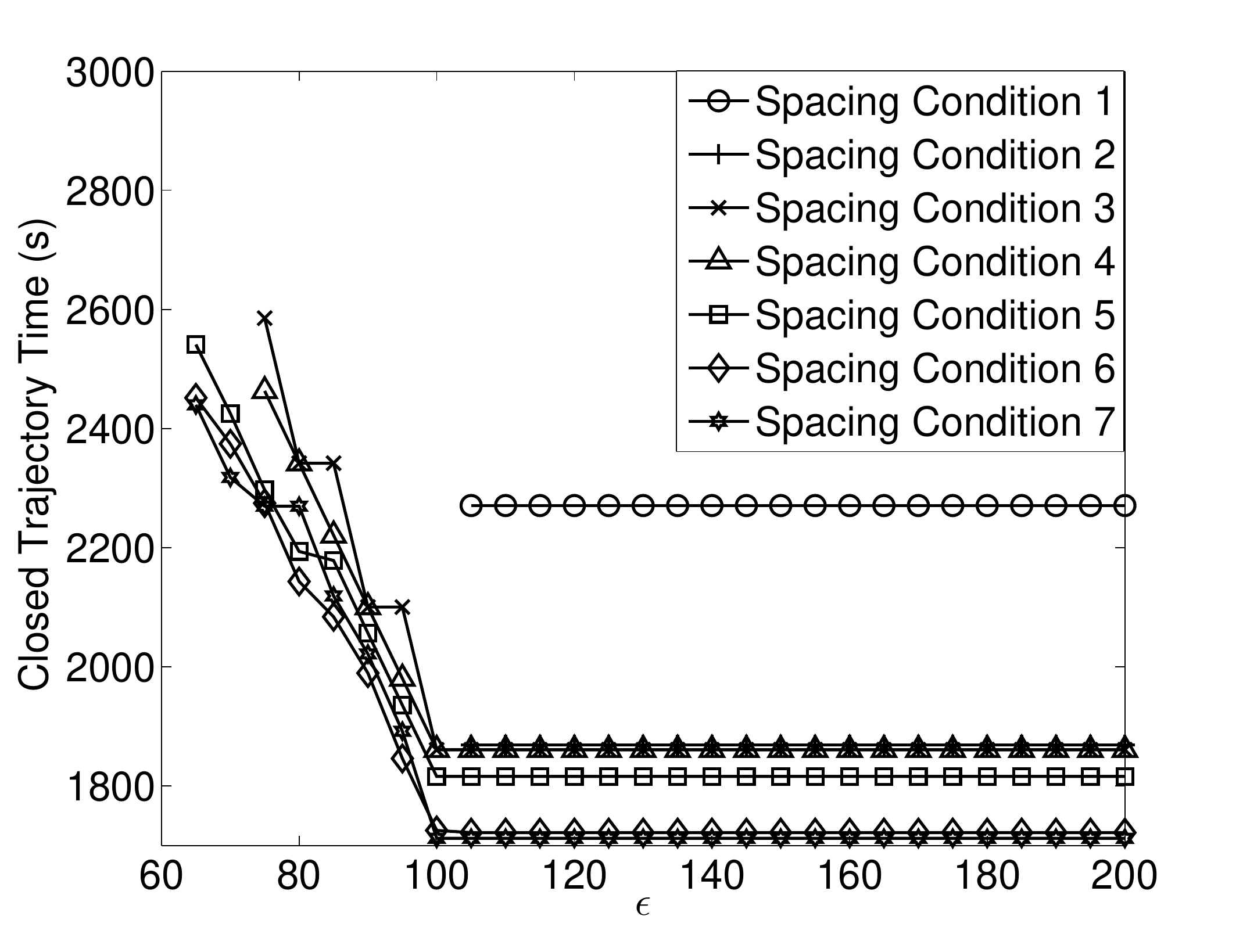}
\includegraphics[width = .7\columnwidth]{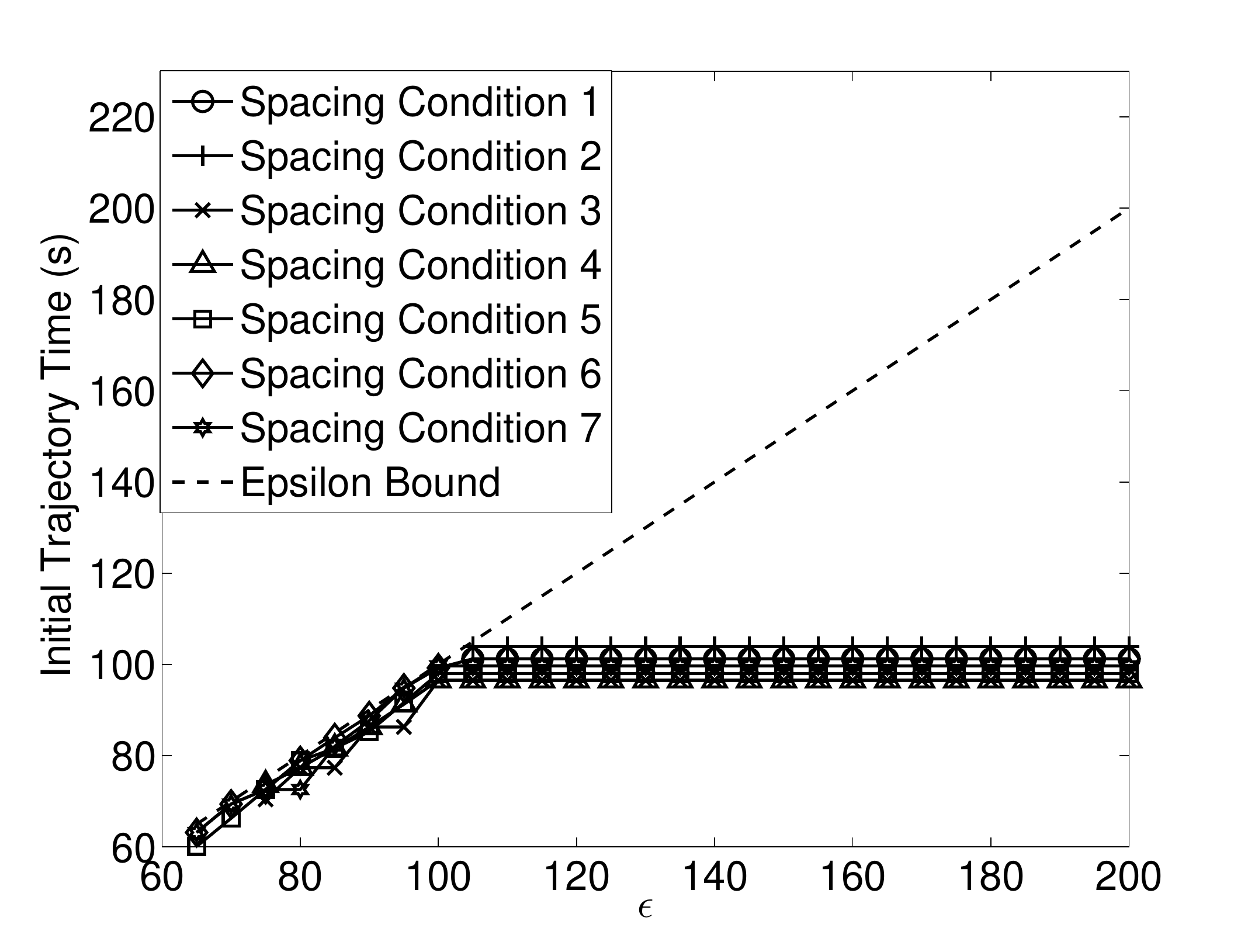}\\
\hspace{-5mm}
\includegraphics[width = .62\columnwidth,height = 45mm]{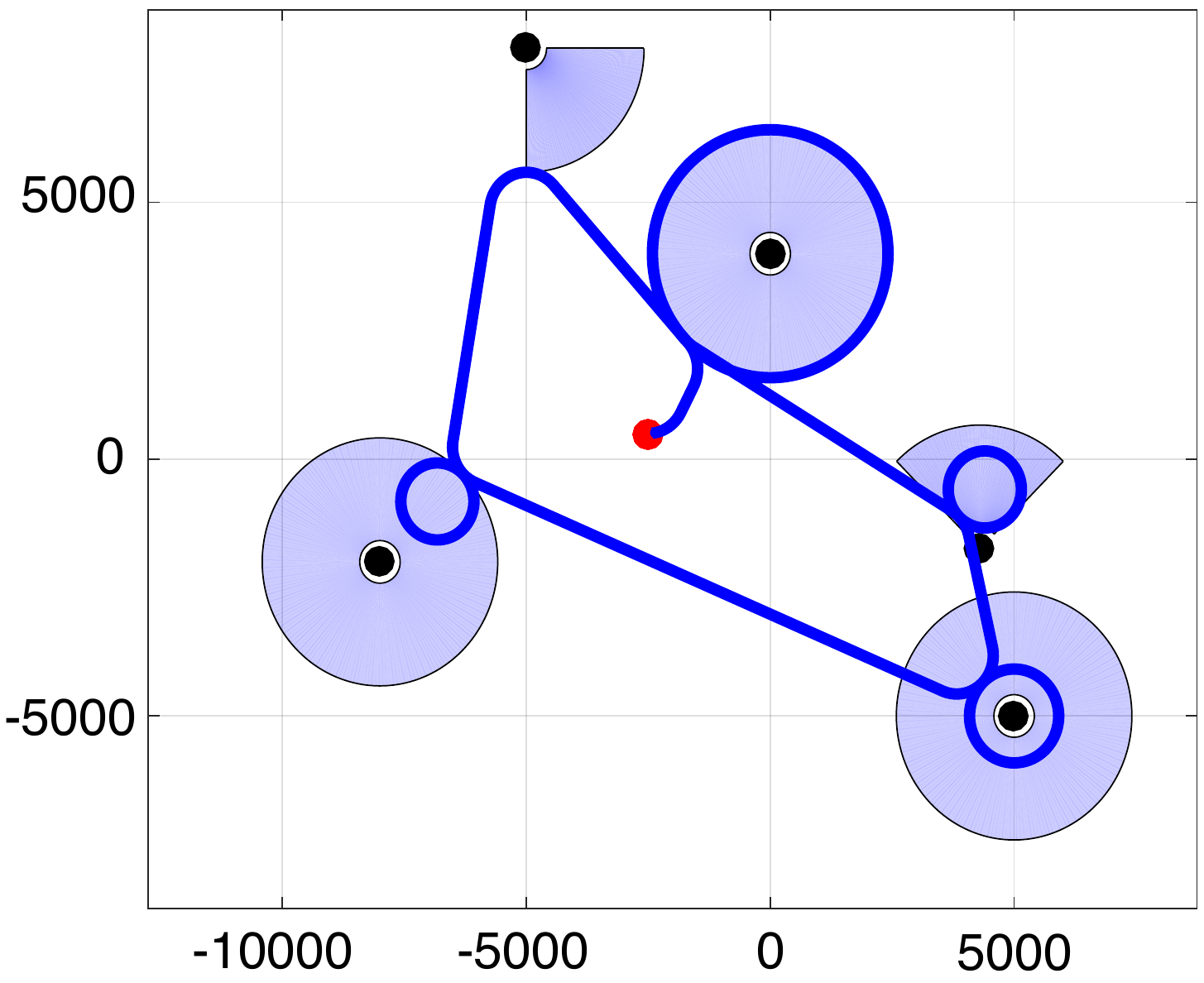}\hspace{7mm}
\includegraphics[width = .62\columnwidth, height = 45mm]{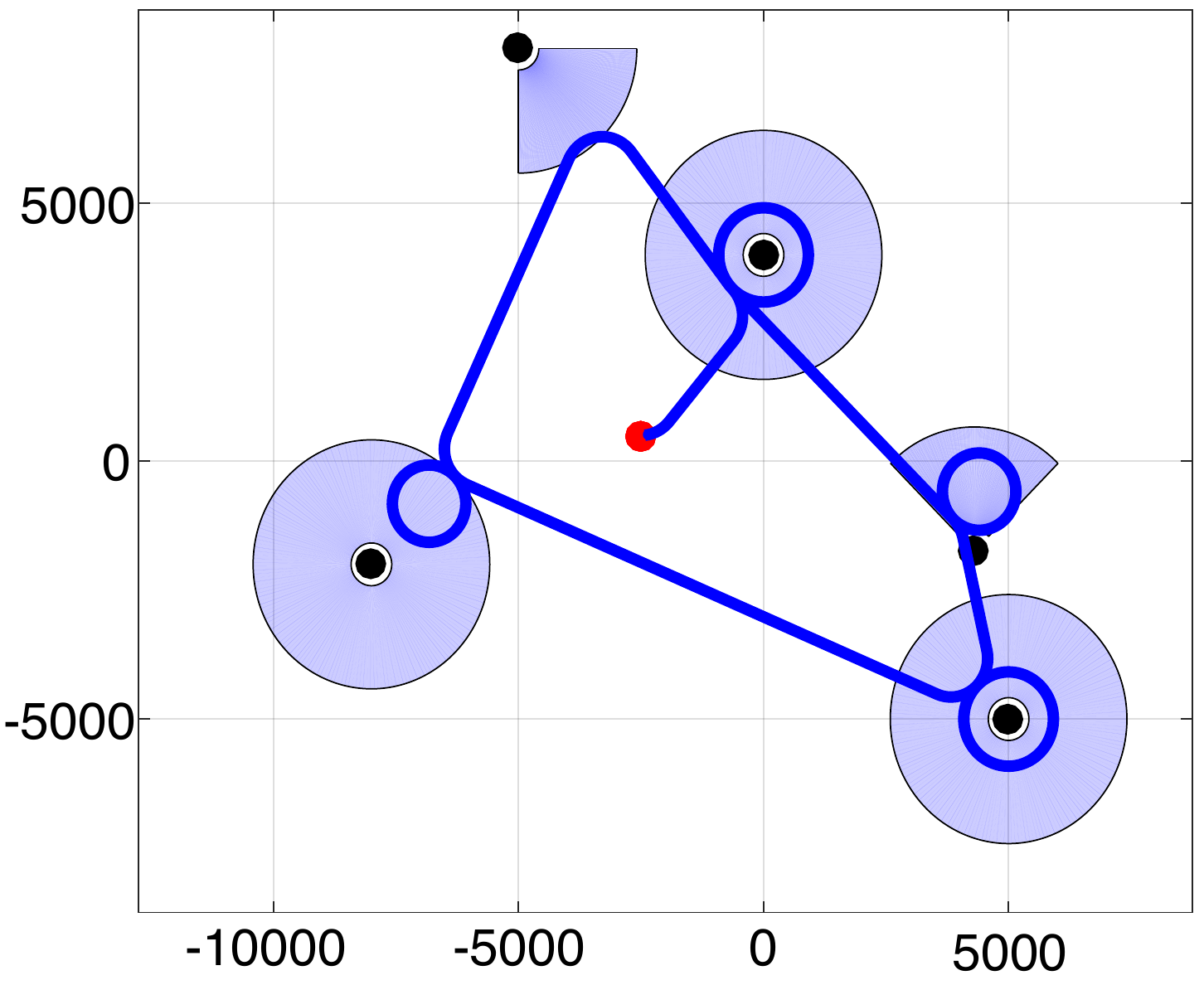}
\caption{Approximate Pareto-Optimal Front and Example Routes for the Spacing Conditions in Table~\ref{tab:pareto}}
\label{fig:pareto}
\end{figure*}
\begin{remark}[Pareto-optimal Fronts]
The optimal cost value in Problem~\ref{prob:continuous} is only sensitive to changes in $\epsilon$ over some finite set $\bigcup_{j = 1}^M [\underline{\epsilon}_j, \overline{\epsilon}_j]$ that depends primarily on the target and UAV locations. This structure can be exploited for more efficient computation of Pareto-optimal fronts.
\end{remark}

\subsection{Performance}
\label{sec:performance}
The next example illustrates the performance of Algorithm~\ref{alg:complete} in comparison to an incremental, ``greedy'' alternative that operates as follows: Visibility region creation and configuration space sampling are done using Algorithms~\ref{alg:visibility} and ~\ref{alg:discretize}. Starting with the initial UAV configuration, each successive UAV destination is chosen by selecting the closest node (Dubins distance) associated with a target that has not yet been imaged. A valid route is constructed by appending dwell-time maneuvers and connecting the last selected configuration ($v_M$) with the first ($v_1$). We consider a $5$ target mission with the same UAV parameters and the same target locations, imaging behaviors, and tolerances as in the previous example. However, we vary the number of dwell-time loops associated with each target (assume each target requires same number of loops).

The difference between the closed trajectory times produced by the greedy method and those produced by Algorithm~\ref{alg:complete} as a function of $\epsilon$ under the spacing condition $5$ (Fig.~\ref{fig:pareto}) is shown in Fig.~\ref{fig:performance}. Notice that the relative performance of Algorithm~\ref{alg:complete} improves as both $\epsilon$ and the number of dwell-time loops at each target are increased. In this example, the performance of the greedy search method can be made arbitrarily poor by increasing the number of dwell-time loops. This result is primarily due to targets that require a 360-degree view, since the greedy heuristic generally chooses those points lying on the perimeter of the visibility region, which are very far from the target location. Thus, increasing the number of dwell-time loops (performed at constant radius) can dramatically increase total tour times. As such, Algorithm~\ref{alg:complete} can provide a significant advantage over similar incremental planning strategies when non-trivial dwell-times are required.
\begin{figure}
\centering
\includegraphics[width = .7\columnwidth]{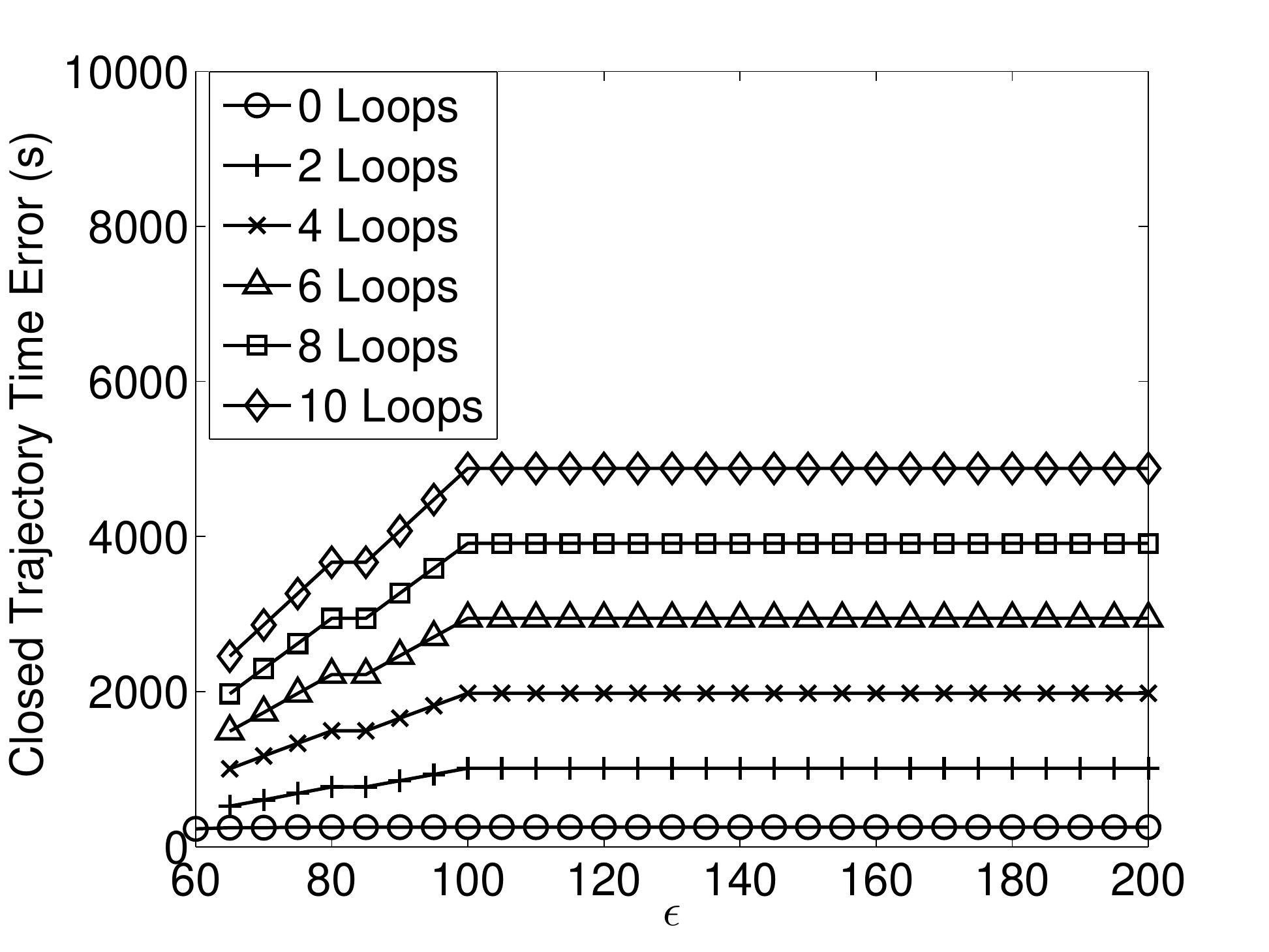}
\caption{Relative performance of the greedy algorithm.}
\label{fig:performance}
\end{figure}

\subsection{Resolution Completeness}
\label{sec:resolution}
To illustrate the resolution completeness properties, consider a mission with $2$ targets whose relevant data is in Table~\ref{tab:target_data_2}. The UAV has the same altitude, velocity, and minimum turning radius as the UAVs in the previous examples, but has initial configuration $v_0 = ((0,0)\text{ m},\pi/7) \in \real^2 \times [0,2\pi)$.
\begin{table*}
\caption{Target Input Data}
\centering
\begin{tabular}
{cccccc}
\midrule\addlinespace[1pt]\midrule $T_j$& $t_j$ & $\mathtt{Beh}_j$ & $\tau_j$& $\left[\supscr{\phi_j}{A}-\supscr{\Delta_j}{A},\supscr{\phi_j}{A}+\supscr{\Delta_j}{A}\right]$ & $ \left[\supscr{\phi_j}{T}- \supscr{\Delta_j}{T},\supscr{\phi_j}{T}+\supscr{\Delta_j}{T}\right]$ \\
\midrule
$T_1$ & $(2131.8,1026.7)\text{ m}$ & $\param{ANY}$ & $0$ & $-$ & $[\frac{\pi}{6}, \frac{\pi}{3}]$ \\
$T_2$ & $(-13840, -5833)\text{ m}$ &$\param{ANY}$  & $1$ & $-$ & $[\frac{\pi}{8}, \frac{3\pi}{8}]$ 
\end{tabular}
\label{tab:target_data_2}
\end{table*}
This example has been carefully constructed so that the optimal solution is easily deduced for select $\epsilon$ conditions. In particular, when $\epsilon\geq 25$ s, the optimal tour involves the UAV making an immediate left turn at minimum radius, and then visiting the remaining target for a total closed tour length of $848.62$ s. When $\epsilon = 16.26$ s, the UAV is constrained to make a shorter initial maneuver, so the UAV travels straight until it hits the visibility region, and then proceeds with the remainder of the tour for an optimal closed tour length of $881.14$ s. A schematic showing optimal routes for the two $\epsilon$ conditions just described is shown in the top left portion of Fig.~\ref{fig:optimal}. (the red curve corresponds to the initial turn for the $\epsilon = 16.26$ s case; the remainder of the route is identical). When $\epsilon\geq 25$ s, the problem instance is non-degenerate (Definition~\ref{def:degeneracy_closed}) and satisfies the conditions required for resolution completeness (Theorem~\ref{thm:resolution}). Therefore, if a reasonable sampling method and GTSP solver are used, the solutions produced by Algorithm~\ref{alg:complete} will tend toward the global optimum with finer sampling. This behavior is illustrated by the top right plot in Figure~\ref{fig:optimal}, which shows the relative error between the cost produced via Algorithm~\ref{alg:complete} and the globally optimal cost (difference divided by the optimum) for the sampling conditions listed in the table when $\epsilon = 130$ s. Notice that the relative error tends to zero with finer discretization. Also note that, for this example, the optimal solution to the discrete approximation involved vertices located on visibility region boundaries, resulting in an insensitivity to radial grid spacing.
In contrast, when $\epsilon = 16.26$ s, the problem becomes degenerate, since there is a single configuration to which the UAV can travel in order to satisfy the $\epsilon$ bound. In this case, Problem~\ref{prob:graph} is infeasible for any sampling scheme that does not choose the precise configuration in question. Therefore, Algorithm~\ref{alg:complete} is not resolution complete in the sense of Theorem~\ref{thm:resolution} when $\epsilon = 16.26$. Note, however, that resolution completeness holds if $\epsilon$ is increased by any arbitrarily small positive amount.
\begin{figure}
\centering
\includegraphics[width = 0.65\columnwidth, height = 45mm]{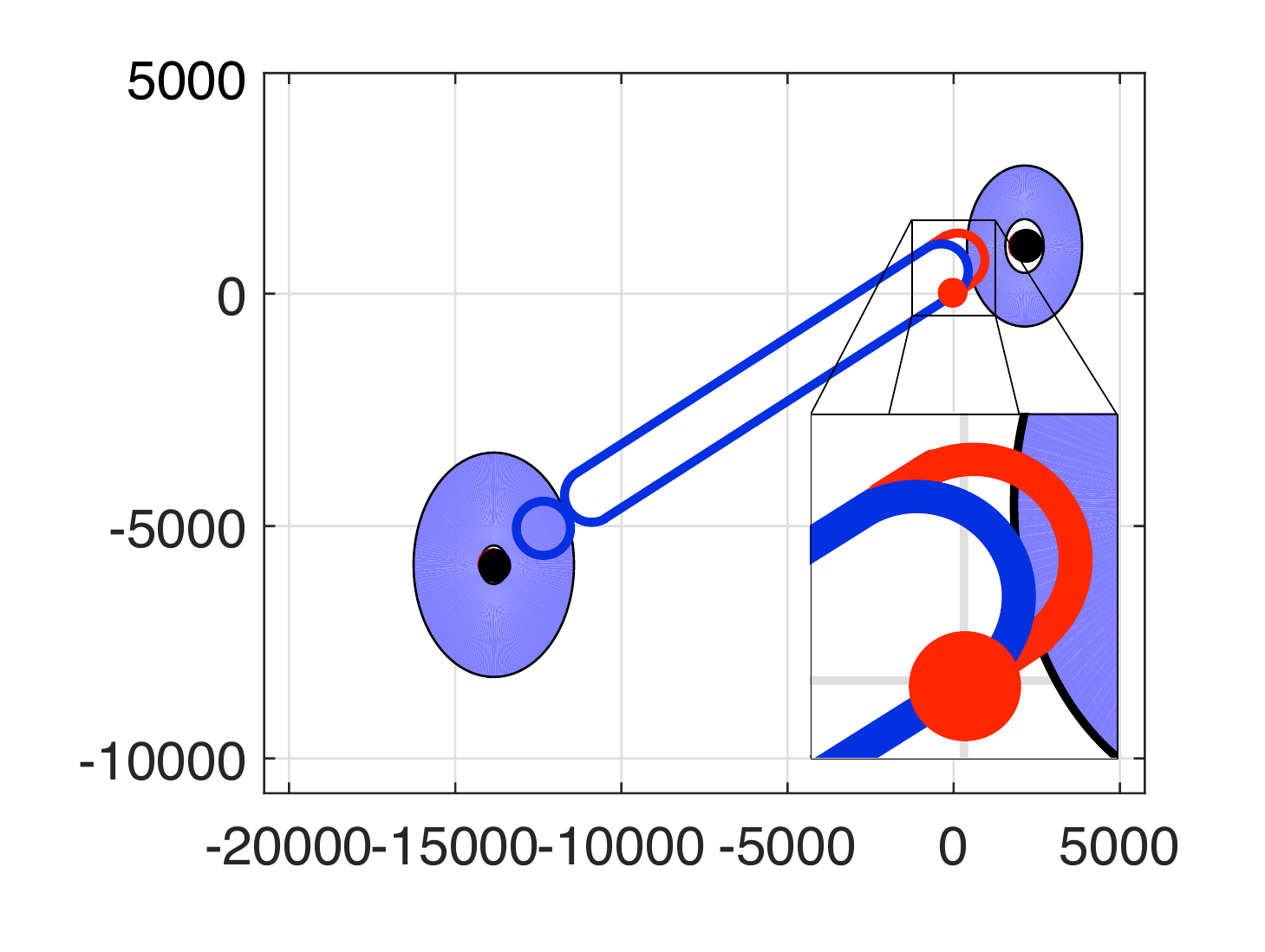}\hspace{5mm}
\caption{Optimal routes when $\epsilon = 16.26$ and $25$ s.}
\label{fig:optimal}
\end{figure}
\begin{figure}
\centering
\includegraphics[width = 0.65\columnwidth]{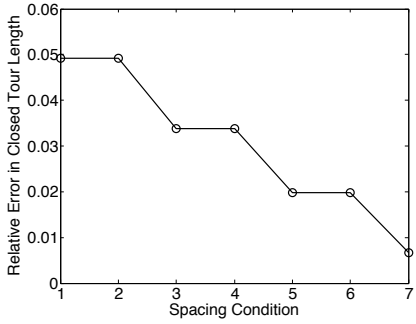}
\caption{Relative cost error when $\epsilon = 130$ s for the spacing conditions in Table~\ref{tab:pareto}.}
\label{fig:optimal}
\end{figure}

\section{Conclusions}
\label{sec:conclusions}

A novel algorithmic framework was presented for constructing unmanned aerial vehicle trajectories for surveillance of multiple targets. Adopting reasonable constraints on optimal dwell-time behavior and UAV maneuvers, the framework works to balance mission goals, namely, the closed trajectory and the initial maneuver times, while simultaneously accommodating both viewing and dwell-time constraints associated with each target. In particular, an $\epsilon$-constraint scalarization method is applied to rigorously pose the multi-objective problem as a constrained optimization, which is then approximated by a finite, path-planning problem that results from careful sampling of the UAV configuration space. Solutions to related GTSP problems can then be utilized to construct solutions to the discrete approximation which, in many cases, map directly to globally optimal solutions of the discrete problem. These solutions were then mapped to solutions of the continuous problem. It was shown that, under certain conditions, the complete heuristic procedure is resolution complete in the sense that solution quality generally increases with increasingly fine sampling.

Avenues of future research include the expansion to the multi-vehicle case, explicit comparisons with other routing schemes (e.g., Markov chain-based schemes), and an investigation of alternative discretization strategies. In addition, incorporation of uncertain dwell-times and the explicit pairing with other facets of complex missions, e.g. operator analysis of imagery, should be explored.

\appendix
\section*{Appendix: Resolution Completeness of Algorithm~\ref{alg:complete}}
We start with some preliminary notions. For this appendix, we assume that all angles $\theta \in [0,2\pi)$ are equivalently represented as points on the unit circle $S^1 \coloneqq \{x \in \real^2\;|\; \|x\| = 1\}$ via the relation $\theta \mapsto (\cos(\theta),\sin(\theta))$. As such, we assume without loss of generality that the UAV configuration space is $\real^2 \times S^1$. Recall that $\texttt{DWL}_j$ is defined as the set of configurations from which a UAV can begin executing a feasible dwell-time maneuver at $T_j$. We first parameterize the set of UAV routes that can be produced by Algorithm~\ref{alg:complete}. 
\begin{definition}[Initial Maneuver Set]
The parametrized \emph{initial maneuver set}, $\texttt{INL}_\epsilon$, is defined \begin{equation*}\texttt{INL}_\epsilon\coloneqq \left.\left\{v \in \bigcup_{j = 1}^M \texttt{DWL}_j\;\right|\; \texttt{DIST}(v_0,v) \leq \epsilon\right\},\end{equation*}
where $\texttt{DIST}(v_0,v)$ is the time required for the UAV to traverse the optimal Dubins path from $v_0$ to $v$.
\label{def:initial}
\end{definition}
Note that we have re-defined $\texttt{INL}_\epsilon$ as the continuous analog of the discrete set of Section~\ref{sec:discrete}.
\begin{definition}[Closed Trajectory Set]
The parametrized \emph{closed trajectory set}, $\texttt{CLS}_\epsilon$, is defined 
\[
\cramped{\texttt{CLS}_\epsilon\coloneqq  \bigcup_{\mathclap{\sigma \in \texttt{Perm}(M)}}  (\texttt{DWL}_{\sigma(1)}\cap\texttt{INL}_\epsilon)\times \texttt{DWL}_{\sigma(2)}\times\cdots\times \texttt{DWL}}_{\sigma(M)},\]
where $\texttt{Perm}(M)$ is the set of all permutations of $\until M$.
\label{def:param}
\end{definition}
Notice that we can map each $\textbf{v}\coloneqq (v_1, v_2, \ldots, v_M) \in \texttt{CLS}_\epsilon$ to a feasible solution of Problem~\ref{prob:continuous} by (i) appending the initial configuration $v_0$, (ii) constructing optimal Dubins routes between successive nodes (connecting $v_M$ to $v_1$), and (iii) appending dwell-time maneuvers.
Strictly speaking, such a mapping  is not unique, since, for some $j$, there may exist $v \in \texttt{DWL}_j$ that is the starting configuration for multiple distinct dwell-time maneuvers (associated to the same target). However, recalling Remark~\ref{rem:correspondence}, this ambiguity is inconsequential to the present analysis and we can assume without loss of generality that, for each $j$, the correspondence between $\param{DWL}_j$ and the set of dwell-time maneuvers at $T_j$ is bijective. Under this assumption, any optimal solution to Problem~\ref{prob:continuous} is uniquely associated with some element in the set $\texttt{CLS}_\epsilon$ via the mapping just described.
With this in hand, we now assign an appropriate cost to each element of $\texttt{CLS}_\epsilon$.
\begin{definition}[Cost]
Define the map $\map{\texttt{LGTH}}{\texttt{CLS}_\epsilon}{\real_{\geq}}$ where $\texttt{LGTH}((v_1, v_2, \ldots, v_M))$ is the time required for the UAV to traverse the closed trajectory (beginning at $v_1$) that sequentially touches and performs the dwell-time maneuver associated with all vertices $v_1, \ldots, v_M$.
\label{def:cost}
\end{definition}
For each $T_j$, the length of an appropriate dwell-time maneuver varies continuously as a function of the maneuver's starting configuration (element in $\texttt{DWL}_j$). As such, results in~\cite{KJO-PO-SD:12} imply that (i) the map $\texttt{LGTH}$ is well-defined and continuous except on a finite set of $(3M-1)$-dimensional smooth surfaces embedded in $(\real^2\times S^1)^M$, and (ii) in the limit from one side of each discontinuity surface, $\texttt{LGTH}$ is continuous up to and on the surface. The presence of discontinuity sets necessitates the following definition. Here, $\texttt{LGTH}^*\coloneqq \min \;\param{LGTH}$.
\begin{definition}[Degeneracy]
An instance of Problem~\ref{prob:continuous} is called \emph{degenerate} if, for every sequence $\textbf{v}_1, \textbf{v}_2, \ldots \in \texttt{CLS}_\epsilon$ such that $\texttt{LGTH}(\textbf{v}_k) \to \texttt{LGTH}^*$ and $\textbf{v}^* \coloneqq \lim_{k \to \infty} \textbf{v}_k$ exists, the limit $\textbf{v}^*$ either (i) is not contained in $\texttt{CLS}_\epsilon$, (ii) belongs to a discontinuity set of $\texttt{LGTH}$, or (iii) has a first component (corresponding to the starting point of the closed trajectory) that is isolated in the set $\text{INL}_\epsilon$.
\label{def:degeneracy_closed}
\end{definition}
Finally, we introduce \emph{dense sampling procedures}.
\begin{definition}[Dense Sampling Procedure]
Suppose that, for each $N \in \mathbb{N}$, a set $\mathcal{A} \subset \real^2 \times S^1$ is sampled $N$ times to form a discrete subset $A_N$, i.e. $|A_N| = N$ for each $N$. Such a procedure is called a \emph{dense sampling procedure} if, for any $a \in \mathcal{A}$ and any open neighborhood $U$ of $a$, there exists $\hat{N} \in \mathbb{N}$ such that $A_N \cap U$ is non-empty for all $N > \hat{N}$.
\label{def:dense}
\end{definition}
%
%
We are now ready to rigorously characterize resolution completeness of Algorithm~\ref{alg:complete}.
\begin{theorem}[Resolution Completeness]
If Problem~\ref{prob:continuous} is not degenerate, and the set $\texttt{INL}_\epsilon$ is such that: (i) $\texttt{INL}_\epsilon \neq \emptyset$, and (ii) there exists $\hat{\jmath} \in \until M$ such that either $\texttt{INL}_\epsilon \subseteq \texttt{DWL}_{\hat{\jmath}}$ or $\texttt{DWL}_{\hat{\jmath}} \subseteq \texttt{INL}_\epsilon$, then Algorithm~\ref{alg:complete} is  \emph{resolution complete} in the following sense:

Form a sequence $\cramped{\{(\texttt{INL\_MNVR}_N, \texttt{CLS\_TRAJ}_N)\}_{N \in \mathbb{N}}}$, where each element $(\texttt{INL\_MNVR}_N, \texttt{CLS\_TRAJ}_N)$ represents the output of a call to Algorithm~\ref{alg:complete} when: (i) the number of discrete samples at each target is $N$, (ii) $\texttt{INL}_\epsilon^*$ (Algorithm~\ref{alg:discrete}, line $3$) is chosen to satisfy the conditions of Theorem~\ref{thm:gtsp}, and (iii) an optimal GTSP solution is found (Algorithm~\ref{alg:discrete}, line $5$). 
Then, $(\texttt{INL\_MNVR}_N, \texttt{CLS\_TRAJ}_N)$ is a feasible solution to Problem~\ref{prob:continuous} for each $N$, and if a dense sampling procedure is used to generate the discrete node sets at each target, 
then the length of the tours in the sequence $\{\texttt{CLS\_TRAJ}_N\}$ approaches the length of an optimal solution to Problem~\ref{prob:continuous} as $N \to \infty$.
\label{thm:resolution}
\end{theorem}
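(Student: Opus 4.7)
The plan is to decompose the claim into (a) feasibility of each $(\texttt{INL\_MNVR}_N, \texttt{CLS\_TRAJ}_N)$ and (b) convergence $\texttt{LGTH}(\texttt{CLS\_TRAJ}_N) \to \texttt{LGTH}^*$, handling the latter by sandwiching the discrete minimum between a trivial lower bound and an upper bound obtained from a sampled approximation of a continuous minimizer.

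Feasibility is inherited directly from earlier results: Theorem~\ref{thm:feasibility} supplies a feasible discrete node sequence $(v_1, \ldots, v_M)$ for Problem~\ref{prob:graph}, and the subsequent Dubins/dwell-time concatenation in Algorithm~\ref{alg:complete} assembles, by construction, a trajectory that respects the imaging, dynamic, and $\epsilon$-bound constraints of Problem~\ref{prob:continuous}. This also immediately yields the trivial lower bound $\texttt{LGTH}(\texttt{CLS\_TRAJ}_N) \geq \texttt{LGTH}^*$ for every $N$.

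For the upper bound, I would first argue that, for all $N$ large enough, the continuous conditions (i) and (ii) imply the discrete hypotheses of Theorem~\ref{thm:gtsp}, with $\texttt{INL}_\epsilon^*$ taken as the $\hat{\jmath}$-tagged samples of $\texttt{INL}_\epsilon$. Hence the GTSP returned in line 5 of Algorithm~\ref{alg:discrete} is equivalent to Problem~\ref{prob:graph} on the sampled graph, and $\texttt{LGTH}(\texttt{CLS\_TRAJ}_N)$ equals the discrete optimum. It therefore suffices to exhibit a sequence of admissible discrete tuples $\tilde{\textbf{v}}^N$ with $\texttt{LGTH}(\tilde{\textbf{v}}^N) \to \texttt{LGTH}^*$. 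Non-degeneracy yields a minimizer $\textbf{v}^* = (v_1^*, \ldots, v_M^*) \in \texttt{CLS}_\epsilon$ that lies at a continuity point of $\texttt{LGTH}$ with $v_1^*$ non-isolated in $\texttt{INL}_\epsilon$, and with associated permutation $\sigma$. The dense sampling hypothesis then provides $\tilde{v}_j^N \in \texttt{DWL}_{\sigma(j)}$ with $\tilde{v}_j^N \to v_j^*$ for each $j$, and continuity of $\texttt{LGTH}$ at $\textbf{v}^*$ drives $\texttt{LGTH}(\tilde{\textbf{v}}^N) \to \texttt{LGTH}^*$.

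The main obstacle will be guaranteeing that $\tilde{v}_1^N \in \texttt{INL}_\epsilon$, so that $\tilde{\textbf{v}}^N$ is genuinely admissible for the discrete problem and not merely a nearby infeasible tuple. When $\texttt{DIST}(v_0, v_1^*) < \epsilon$ strictly, continuity of the Dubins distance ensures every sample of $\texttt{DWL}_{\sigma(1)}$ sufficiently close to $v_1^*$ lies in $\texttt{INL}_\epsilon$, so dense sampling suffices. The delicate boundary case $\texttt{DIST}(v_0, v_1^*) = \epsilon$ is precisely what the non-isolation clause of Definition~\ref{def:degeneracy_closed} is engineered to handle: one leverages that $v_1^*$ is an accumulation point of $\texttt{INL}_\epsilon$, together with the local regularity of the Dubins distance level set and the structure of $\texttt{DWL}_{\sigma(1)}$, to produce a sequence in $\texttt{DWL}_{\sigma(1)} \cap \texttt{INL}_\epsilon$ converging to $v_1^*$, which dense sampling can then approximate. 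Combining the resulting upper bound with the trivial lower bound yields $\texttt{LGTH}(\texttt{CLS\_TRAJ}_N) \to \texttt{LGTH}^*$, establishing resolution completeness.
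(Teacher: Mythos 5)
Your proposal follows essentially the same route as the paper's proof: feasibility via Theorem~\ref{thm:feasibility}, reduction to an exact discrete optimum via the Theorem~\ref{thm:gtsp} equivalence, and convergence by combining non-degeneracy (a minimizing limit $\textbf{v}^*$ in $\texttt{CLS}_\epsilon$ at a continuity point of $\texttt{LGTH}$ with non-isolated first component), continuity of $\texttt{LGTH}$ near $\textbf{v}^*$, and the dense-sampling hypothesis. Your explicit lower-bound/upper-bound sandwich and your separate treatment of the boundary case $\texttt{DIST}(v_0,v_1^*)=\epsilon$ are only presentational refinements of the paper's argument (which folds that issue into choosing the neighborhood $U$ inside the feasible product set), not a different method.
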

\begin{proof}
First note that the notion of resolution completeness described here is well-defined under the specified assumptions. Indeed, if the (non-discrete) set $\texttt{INL}_\epsilon$ satisfies the necessary assumptions, then the discrete analog always has the properties required for Theorem~\ref{thm:gtsp}, and $\texttt{INL}_\epsilon^*$ can be chosen to exploit the GTSP equivalence when constructing the sequence $\{(\texttt{INL\_MNVR}_N, \texttt{CLS\_TRAJ}_N)\}_{N \in \mathbb{N}}$.

Feasibility of each $\texttt{INL\_MNVR}_N$ follows from Theorem~\ref{thm:feasibility}, and the fact that feasible solutions to Problem~\ref{prob:graph} map to feasible solutions to Problem~\ref{prob:continuous}.
Let $\textbf{v}_1, \textbf{v}_2, \ldots \in \texttt{CLS}_\epsilon$ be a sequence such that $\texttt{LGTH}(\textbf{v}_i) \to \texttt{LGTH}^*$. Note that $\texttt{CLS}_\epsilon \subseteq (\real^2\times S^1)^M$ can be represented as a bounded subset in $(\real^{2}\times S^1)^M \subseteq \real^{4M}$. Invoking the Bolzano-Weierstrass theorem and compactness of $S^1$, any infinite sequence in $\texttt{CLS}_\epsilon$ contains a subsequence that converges to some point in $(\real^{2}\times S^1)^M$. Thus, recalling that Problem~\ref{prob:continuous} is non-degenerate, we can assume without loss of generality that $\textbf{v}^* = \lim_{i \to \infty} \textbf{v}_i$ exists, is contained in $\texttt{CLS}_\epsilon$, does not belong to a discontinuity sets of $\texttt{LGTH}$, and has a first component that is not isolated in $\param{INL}_\epsilon$. Since the function $\texttt{LGTH}$ is continuous at the point $\textbf{v}^*$, for any $\delta>0$, there exists an open neighborhood $U \subset \{(\param{DWL}_{\sigma(1)}\cap\param{INL}_\epsilon) \times \cdots\times \param{DWL}_{\sigma(M)}\;|\;\sigma \text{ permutes the set }\until M\}$ of $\textbf{v}^*$, within which $\texttt{LGTH}$ takes values within $\delta$ of $\texttt{LGTH}^*$. Since the sampling procedure is dense, for some $\hat{N}$, there will be a discrete node placed inside of the set $U$ for all $N\geq \hat{N}$. Since the GTSP is solved exactly, it follows that $|\texttt{LGTH}(\texttt{CLS\_TRAJ}_N) - \texttt{LGTH}^*| <\delta$ for $N >\hat{N}$, proving convergence. 
\end{proof}

Theorem~\ref{thm:resolution} states that, under a non-degeneracy
assumption about the problem structure, reasonable implementations of
Algorithm~\ref{alg:complete} will produce solutions that approximate
optimal solutions to the full, multi-objective routing
problem. Indeed, the resolution completeness property ensures that
Problem~\ref{prob:graph} will be appropriately reflective of
Problem~\ref{prob:continuous}.

A few final comments are in order. First, note that degenerate
problems occur only in very select situations: a degenerate problem instance is made non-degenerate by
perturbing target locations by an arbitrarily small amount (example
degenerate problem instances for particular cases are shown
in~\cite{KJO-PO-SD:12}). Thus, the non-degeneracy condition in
Theorem~\ref{thm:resolution} is not typically restrictive. Second, if
$\texttt{INL}_\epsilon$ does not satisfy the conditions of
Theorem~\ref{thm:resolution}, then resolution completeness does not
hold in general since Algorithm~\ref{alg:discrete} is not
guaranteed to find an optimal solution to Problem~\ref{prob:graph}
(Theorem~\ref{thm:gtsp}). Nevertheless, the quality of solutions
produced by Algorithm~\ref{alg:complete} generally increase as
sampling granularity is made increasingly fine. Finally, it is not
generally possible to find optimal solutions to GTSPs. Therefore, the
utility of Theorem~\ref{thm:resolution} is its ability to provide
intuition about qualitative solution behavior, and to ensure that
the discrete method herein is an appropriate approximation.


\section*{Acknowledgments}
The authors thank Karl J. Obermeyer of Obermeyer Labs for his
insight regarding vehicle routing.  

\bibliographystyle{asmems4}
\bibliography{alias,Main,New,FB}
\end{document}